\newcolumntype{H}{>{\setbox0=\hbox\bgroup}c<{\egroup}@{}}
\def\smallunderbrace#1{\mathop{\vtop{\m@th\ialign{##\crcr
   $\hfil\displaystyle{#1}\hfil$\crcr
   \noalign{\kern3\p@\nointerlineskip}
   \tiny\upbracefill\crcr\noalign{\kern3\p@}}}}\limits}
\newcolumntype{C}[1]{>{\centering\let\newline\\\arraybackslash\hspace{0pt}}m{#1}}
\newtheorem{problem}{Problem}
\newtheorem{definition}{Definition}
\newtheorem{lemma}{Lemma}
\newtheorem{theorem}{Theorem}
\newcommand{\ul}{\underline}
\newcommand{\xaxis}{$x$-axis\xspace}
\newcommand{\yaxis}{$y$-axis\xspace}
\newcommand{\pf}{\textsc{PvF}\xspace}
\newcommand{\method}{PENminer\xspace}
\newcommand{\methodOffline}{oPENminer\xspace}
\newcommand{\methodOnline}{sPENminer\xspace}
\newcommand{\graph}{G}
\newcommand{\nodes}{V}
\newcommand{\relation}{r}
\newcommand{\numNodes}{|\nodes|}
\newcommand{\edges}{E}
\newcommand{\numEdges}{|\edges|}
\newcommand{\evoGraph}{\mathcal{G}}
\newcommand{\update}{\ul{u}}
\newcommand{\newUpdate}{\update_{\text{new}}}
\newcommand{\timestamp}{t}
\newcommand{\updateType}{\pm}
\newcommand{\updateTimestamp}{\textsc{timestamp}}
\newcommand{\stream}{\mathscr{S}}
\newcommand{\streamStart}{\textsc{start}(\stream)}
\newcommand{\streamEnd}{\textsc{end}(\stream)}
\newcommand{\streamLen}{|\stream|}
\newcommand{\pattern}{x}
\newcommand{\patternSize}{k}
\newcommand{\extractedPatterns}{\mathcal{X}_{\text{extracted}}}
\newcommand{\duration}{\delta}
\newcommand{\view}{\phi}
\newcommand{\idView}{\textsc{ID}\xspace}
\newcommand{\labelView}{\textsc{Label}\xspace}
\newcommand{\orderView}{\textsc{Order}\xspace}
\newcommand{\window}{\mathscr{W}}
\newcommand{\windowWidth}{w}
\newcommand{\streamRate}{\mu}
\newcommand{\maxDuration}{\duration_{\max}}
\newcommand{\maxSize}{\patternSize_{\max}}
\newcommand{\order}{O}
\newcommand{\x}{x}
\newcommand{\uniformx}{\tilde{\x}}
\newcommand{\xUniverse}{\mathcal{X}}
\newcommand{\intervalStart}{t_s}
\newcommand{\intervalEnd}{t_e}
\newcommand{\interval}{[\intervalStart, \intervalEnd]}
\newcommand{\intervalUniverse}{\mathcal{I}}
\newcommand{\intervalWidth}{|\interval|}
\newcommand{\occ}{t}
\newcommand{\occs}{\mathcal{O}}
\newcommand{\uniqueOccs}{\tilde{\occs}}
\newcommand{\firstOcc}{t_f}
\newcommand{\lastOcc}{t_l}
\newcommand{\occsLong}{\{\firstOcc,\dots,\lastOcc\}}
\newcommand{\occInterval}{[\firstOcc,\lastOcc]}
\newcommand{\occIntervalsmall}{[t_s',t_e']}
\newcommand{\occIntervalWidth}{|\occInterval_\x|}
\newcommand{\freq}{F}
\newcommand{\freqLong}{\freq\big(x;\interval\big)}
\newcommand{\width}{W}
\newcommand{\widthLong}{\width\big(\pattern; \interval\big)}
\newcommand{\spread}{S}
\newcommand{\spreadLong}{\spread\big(\pattern; \interval\big)}
\newcommand{\entropy}{H}
\newcommand{\persistence}{P}
\newcommand{\persistenceLong}{P\big(x;[\intervalStart,\intervalEnd]\big)}
\newcommand{\gaps}{\Gamma}
\newcommand{\gap}{g}
\newcommand{\wExp}{\alpha}
\newcommand{\fExp}{\beta}
\newcommand{\sExp}{\gamma}
\newcommand{\axiomOne}{\textbf{A1}}
\newcommand{\axiomTwo}{\textbf{A2}}
\newcommand{\axiomThree}{\textbf{A3}}
\newcommand{\axiomFour}{\textbf{A4}}
\newcommand{\propOne}{\textbf{P1}}
\newcommand{\propTwo}{\textbf{P2}}
\newcommand{\propThree}{\textbf{P3}}
\newcommand{\rOne}{\textbf{RQ1}}
\newcommand{\rTwo}{\textbf{RQ2}}
\newcommand{\rThree}{\textbf{RQ3}}
\newcommand{\sedanspot}{\textsc{SedanSpot}\xspace}
\newcommand{\midas}{\textsc{Midas-R}\xspace}
\newcommand{\ds}{\textsc{DS}\xspace}
\newcommand{\freqBaseline}{\textsc{Freq}\xspace}
\newcommand{\timestampUniverse}{\mathbb{R}_{\geq 0}}
\newcommand{\euemail}{Eu Email\xspace}
\newcommand{\columbus}{Columbus Bike\xspace}
\newcommand{\chicago}{Chicago Bike\xspace}
\newcommand{\boston}{Boston Bike\xspace}
\newcommand{\stackoverflow}{Stackoverflow\xspace}
\newcommand{\darpa}{DARPA IP\xspace}
\newcommand{\NYC}{NYC Taxi\xspace}
\newcommand{\reddit}{Reddit\xspace}
\newcommand{\odds}{\texttt{\small r/nightly\_pick}\xspace}
\newcommand{\hockey}{\texttt{\small r/hockey}\xspace}
\newcommand{\cig}{\texttt{\small r/electronic\_cigarette}\xspace}
\newcommand{\poker}{\texttt{\small r/ecrpoker}\xspace}
\newcommand{\bestof}{\texttt{\small r/bestof}\xspace}
\newcommand{\finance}{\texttt{\small r/personalfinance}\xspace}
\newcommand{\taxiCommision}{\texttt{\small zone207}\xspace}
\newcommand{\suspiciousSrc}{\texttt{\small zone135}\xspace}
\newcommand{\suspiciousDst}{\texttt{\small zone170}\xspace}
\newcommand{\nothingSrc}{\texttt{\small zone234}\xspace}
\newcommand{\nothingDst}{\texttt{\small zone198}\xspace}
\begin{document}
\fancyhead{}
\title{Mining Persistent Activity in Continually Evolving Networks}

\author{Caleb Belth}
\affiliation{
  \institution{University of Michigan}
}
\email{cbelth@umich.edu}

\author{Xinyi Zheng}
\affiliation{
  \institution{University of Michigan}
}
\email{zxycarol@umich.edu}

\author{Danai Koutra}
\affiliation{
  \institution{University of Michigan}
}
\email{dkoutra@umich.edu}

\begin{abstract}
Frequent pattern mining is a key area of study that gives insights into the structure and dynamics of evolving networks, such as social or road networks. 
However, not only does a network evolve, but often the \emph{way} that it evolves, itself evolves. Thus, knowing, in addition to patterns' frequencies, for how long and how regularly they have occurred---i.e., their \emph{persistence}---can add to our understanding of evolving networks. 
In this work, we propose the problem of mining activity that persists through time in continually evolving networks---i.e., activity that repeatedly and consistently occurs. 
We extend the notion of temporal motifs to capture activity among \emph{specific} nodes, in what we call \emph{activity snippets}, which are small sequences of edge-updates 
that reoccur. We propose axioms and properties that a measure of persistence should satisfy, and develop such a persistence measure. We also propose \method, an efficient framework for mining activity snippets' \emph{Persistence in Evolving Networks}, and 
design both offline and streaming algorithms. We apply \method to numerous real, large-scale evolving networks and edge streams, and 
find activity that is surprisingly regular over a long period of time, 
but too infrequent to be discovered by aggregate count alone, and bursts of activity exposed by their lack of persistence. Our findings with \method include neighborhoods in NYC where taxi traffic persisted through Hurricane Sandy, 
the opening of new bike-stations, characteristics of social network users, and more. 
Moreover, we use \method towards identifying anomalies in multiple networks, outperforming baselines at identifying subtle anomalies by 9.8-48\% in AUC.
\end{abstract}

\maketitle

\section{Introduction}
\label{sec:intro}
Many networks evolve continually over time, such as traffic networks that encode current en-route traffic, networks of computers (IP-addresses) sending messages to each other, social networks of users interacting over time, and more. 
In order to understand these networks and the systems they represent, it is important to consider not just their structure, but also their temporal dynamics. Towards this goal, existing works have focused on counting 
patterns in networks. These patterns include temporal motifs (small subgraph patterns) \cite{kovanen2011temporal, gurukar2015commit, paranjape2017motifs, Liu2019SamplingMF}, frequent subgraphs in evolving networks \cite{aggarwal2010dense, abdelhamid2017incremental, aslay2018mining}, flow motifs \cite{Kosyfaki2018FlowMI}, communication motifs \cite{Zhao2010CommunicationMA}, and coevolving relational motifs \cite{ahmed2015algorithms}. 
This line of research uses the frequency 
of patterns towards understanding the behavior of networks (e.g., some interaction patterns are more common in Q\&A forums than in instant messengers \cite{paranjape2017motifs}). However, many patterns in evolving networks may only last for a short period of time (e.g., bursty activity). This can lead to large aggregate counts, making a bursty anomaly falsely appear to be an intrinsic characteristic of a dynamic network.
On the other hand, there may be important activity in a network that has low overall frequency, but that occurs continually and regularly~\cite{timecrunch}, such as a stealthy computer-network attack. Thus, to more fully understand the dynamics of evolving networks, a pattern's \emph{persistence}---involving how long it has occurred, how uniformly its occurrences are spread out, \emph{and} what its frequency is---should be taken into account (Fig.~\ref{fig:main}).

\begin{figure}[t]
    \centering
    \includegraphics[width=0.9\columnwidth]{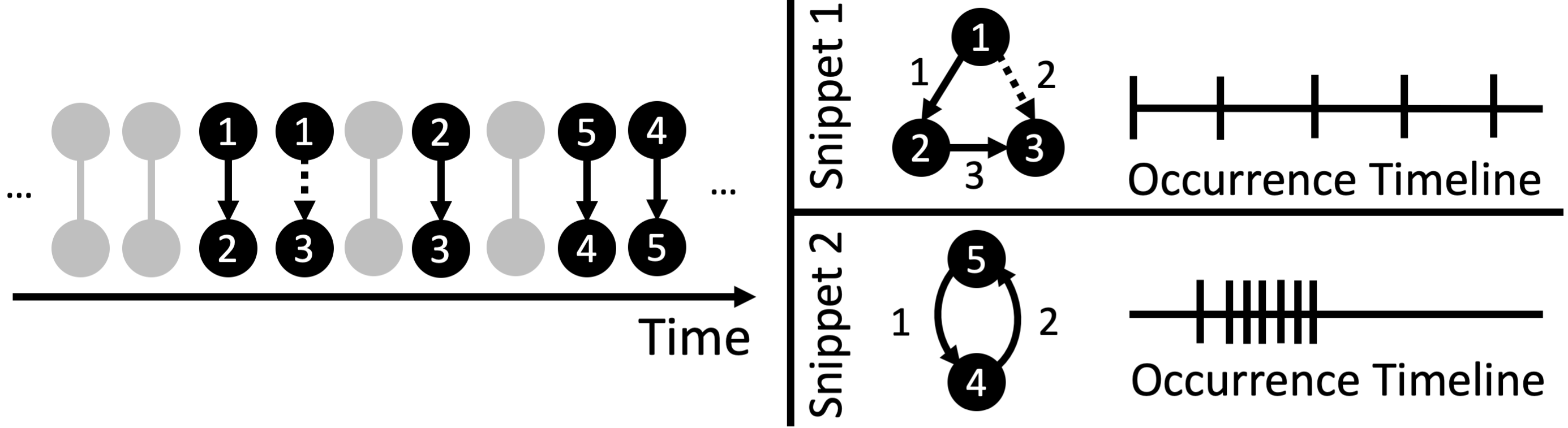}
    \caption{We seek to measure the persistence of activity snippets (i.e., reoccurring sequences of edge-updates) in continually evolving networks, which allows snippets 1 and 2 to be differentiated. While both snippets have approximately the same frequency, snippet 1 is more persistent, and snippet 2 is bursty. Activity snippets differ from temporal motifs by allowing different nodes interacting in the same way to be treated as distinct snippets 
    {(e.g., depending on the application, if snippet 2 reoccurs with nodes 6 and 7, it can either be treated as an instance of the same snippet, or of a new snippet).}
    }
    \label{fig:main}
    \vspace{-0.5cm}
\end{figure}

In this work, we seek to mine persistent activity in continually evolving networks. We view the activity in a network as a stream of edge \emph{updates} (insertions or deletions) over time. We introduce activity snippets by extending the idea of temporal motifs \cite{paranjape2017motifs}, which encode activity among nodes in \emph{general}, to allow for encoding activity between or among \emph{specific} nodes. This extension allows us to capture exact activity, (e.g., a specific edge in the network), which is important for applications like identifying which public transportation routes are used persistently, or identifying suspicious nodes or edges (e.g., nodes 4 and 5 in Fig.~\ref{fig:main}). In other applications, such as social network analysis, we may be interested in interactions between nodes in general, with no specific users in mind. In this case, activity snippets reduce to temporal motifs. In either case, activity can \emph{repeat}, whether it be specific nodes repeatedly engaging,
or different sets of nodes behaving in consistent ways. We seek to analyze this activity in terms of how persistently it occurs.

Our main contributions are as follows:

\begin{itemize}[leftmargin=*]
    \item \textbf{Precise Formulation of Persistence:} We introduce four axioms and three general properties that a measure of persistence ought to satisfy. We also provide a versatile persistence measure, and prove that it satisfies all axioms and properties.
    \item \textbf{Offline and Online Algorithms:} We develop \method, an efficient framework that uses our persistence measure to mine evolving networks. Our offline version, \methodOffline, uses the measure for analyzing time-stamped sequences of edges from the past. Our online version, \methodOnline, computes the measure incrementally for real-time analysis of edge-streams.  
    \item \textbf{Extensive Experiments on Real Data:} We perform experiments on real, large scale evolving networks, which reveal real-world phenomena, including neighborhoods in NYC where taxi traffic persisted through Hurricane Sandy, the opening of new bike-stations in multiple cities, characteristics of social network users, and more. We also demonstrate that \method is scalable in offline and streaming settings, 
    {processing edge-updates in each stream 10K to 360K times faster than the rate of that stream}. 
    This allows both subtle and bursty activity to be identified in real time, right when it happens. 
    \method also effectively identifies subtle anomalies, outperforming baselines by 9.8-48\% in AUC.
\end{itemize}

{Our code is available at {\href{https://github.com/GemsLab/PENminer}{https://github.com/GemsLab/PENminer}}}.

\section{Related Work}
\label{sec:related}

Our work is related to motif mining, frequent subgraph mining, and persistent item mining. 

\vspace{0.1cm}
\noindent
\textbf{Motif Mining.}
\label{subsec:motif}
Network motifs, which are small, frequent subgraph patterns~\cite{Milo2002NetworkMS} 
are used in various problems, such as network summarization~\cite{LiuSDK18-survey} and exploration, and dense subgraph detection \cite{specgreedy}. 
Temporal motifs~\cite{kovanen2011temporal} extend static motifs to evolving networks by adding an ordering to their edges. 
Researchers have analyzed temporal motifs within and across a wide-range of networks \cite{paranjape2017motifs}, and have proposed 
sampling methods to estimate their counts \cite{Liu2019SamplingMF}. Temporal motifs have also been extended to capture the flow of information among nodes in a motif~\cite{Kosyfaki2018FlowMI}. Among different motifs, triangle counting has
attracted significant interest \cite{stefani2017triest}.

A special case, communication motifs, was introduced to model temporal communication patterns of users in communication networks, for the purpose of understanding information flow~\cite{Zhao2010CommunicationMA,gurukar2015commit}. In~\cite{Zhao2010CommunicationMA}, focusing on motif counts, the authors observed stability in the ranking of the ten most frequent motifs over time, but did not investigate subtly persistent or bursty behavior. 
Coevolving Relational Motifs (CRMs) are patterns that describe sets of nodes that evolve together in a consistent way \cite{ahmed2015algorithms}. In CRMs, consistency means that when a set of nodes co-evolve, it is almost always in the way specified by the motif. Thus, it considers the \emph{relative} frequency of the motif, compared to other possible motifs over the same nodes, but does not investigate persistence. 

Our work differs from motif-mining in its focus on \emph{persistence} not  just counts, and much of our analysis deals with \emph{specific} edges or sequences of edges, rather than general activity among nodes.

\vspace{0.1cm}
\noindent
\textbf{Frequent Subgraph Mining}. Frequent subgraph mining (FSM) has two settings: transactional FSM and single graph FSM \cite{jiang2013survey}. The transactional setting attempts to identify, in a sequence of graphs (transactions), subgraphs that appear in a large number of the graphs. 
This setting naturally extends to evolving networks, by treating graph snapshots 
(corresponding to a batch of edge updates) as transactions  ~\cite{ray2014frequent}.
The single graph setting seeks to identify subgraphs that have many instances within a single, large graph \cite{elseidy2014grami}. 
It has also been extended to evolving networks, including edge streams \cite{aggarwal2010dense, abdelhamid2017incremental, aslay2018mining}, where the goal is to adaptively \emph{maintain} the most frequent subgraphs. 
While these methods seek to find \emph{frequent subgraphs}, we seek to find \emph{persistent activity}.

\vspace{0.1cm}
\noindent
\textbf{Persistent Item Mining.} Motivated by scenarios of \textit{stealthy} click-fraud or distributed denial-of-service attacks, persistent item mining in data streams was introduced in \cite{dai2016finding}. 
Arguing that, besides frequency, the length of the time period in which an item appears is important for understanding the dynamics of streaming data, the authors introduced a heuristic definition of a persistent item as one that occurs at least once in a large number of equally-sized observation periods.
As we show, this simple definition violates some intuitive desired properties (\S~\ref{subsec:measure-theory}). Our work goes beyond heuristics to establish a technical definition of a persistence measure, and focuses on edge-streams. 

Persistent community detection, which attempts to find communities that last for long durations of time, has also been studied~\cite{semertzidis2019finding,li2018persistent}. However, these works seek to find tightly-knit subgraphs (communities) that last for a long time, while we focus on measuring \textit{activity} that regularly \textit{re-occurs} through time 
without requiring that this activity take place in contiguous chunks of time.

\section{Theory}
\label{sec:theory}
\subsection{Preliminary Definitions}
\label{subsec:prelim}
 We begin with preliminary definitions. 
Since most definitions apply beyond network settings, 
we first introduce the general concepts in \S~\ref{subsubsec:prelim-general} and then the network-specific terminology in \S~\ref{subsubsec:prelim-network}.

\begin{table}[!t]
\caption{Description of major symbols.}
\label{tab:Symbols}
\vspace{-0.2cm}
\centering
\resizebox{\columnwidth}{!}{
  \begin{tabular}{ll}
  \toprule
     \textbf{Notation} & \textbf{Description}  \\ \midrule
     $\stream$ & Edge stream \\
     $\pattern$ & Activity snippet\\
     $\maxSize$, $\maxDuration$ & Maximum size and duration for a snippet\\
     $\view$ & View of a snippet (\idView, \labelView, or \orderView)\\
     $[t_i,t_j]$, $|[t_i,t_j]|$ & Interval of time and its width $t_j - t_i$\\
     $\occs_\x$, $\uniqueOccs_\x$ & All and unique occurrences of $\x$ in $\interval$, resp.\\
     $\gaps_\x$ & Gaps (lengths of time) between unique occurrences of $\x$\\
     \bottomrule
    \end{tabular}
    \vspace{-0.7cm}
    }
\end{table}

\subsubsection{Events in Time}
\label{subsubsec:prelim-general}

\vspace{0.1cm}
\noindent 
\textbf{Interval of Time.} An interval of time $\interval$ (Fig.~\ref{fig:intervals}) is defined as $\{\timestamp : \timestamp \in \mathbb{R}_{\geq 0}, \timestamp \in \interval\}$.
The width of the interval is $|\interval| = \intervalEnd - \intervalStart \geq 0$. The set of all intervals is $\intervalUniverse \triangleq \{[\intervalStart, \intervalEnd] : \intervalStart, \intervalEnd \in \timestampUniverse, \intervalStart \leq \intervalEnd\}$.

\begin{wrapfigure}[3]{r}{0.3\linewidth}
    \vspace{-0.9cm}
    \centering
    \includegraphics[width=0.7\linewidth]{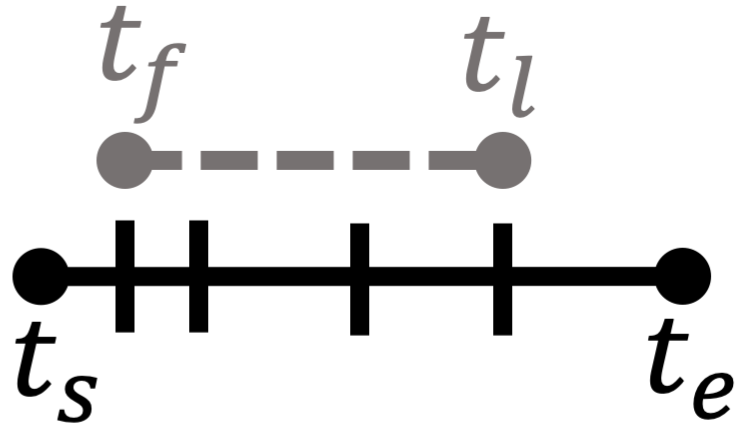}
    \vspace{-0.35cm}
    \caption{Intervals.} 
    \label{fig:intervals}
    \vspace{-0.4cm}
\end{wrapfigure}

\vspace{0.1cm}
\noindent 
\textbf{Event.} An event $\x$, with respect to an interval $\interval$, is something that occurs at least once
in that interval. Examples include item purchases in a sequence of transactions, 
and measurements in a time series. The universe of events is $\xUniverse$.

\vspace{0.1cm}
\noindent 
\textbf{Event Occurrence}. 
An occurrence, $\occ \in \interval$ of $\x$, is a timestamp. 
All the occurrences of $\x$, $\occs_\x = \occsLong$, form an ordered multiset of timestamps between the first and last occurrrences, and the corresponding \emph{interval of occurrences} is $\occInterval_\x \subseteq \interval$ (e.g., Fig.~\ref{fig:intervals}). 
We denote the ordered set of \emph{unique} occurrences $\uniqueOccs_\x$.

\vspace{0.1cm}
\noindent 
\textbf{Occurrence Gaps}. The gaps (i.e., amount of time) between occurrences of $\x$ 
form the sequence $\gaps_x = (\gap_1, \dots, \gap_{|\uniqueOccs_\x| - 1})$, where $\gap_i = t_{i+1} - t_{i}$ for $t_i, t_{i+1} \in \uniqueOccs_\x$. The number of gaps is $|\gaps_\x| = |\uniqueOccs_\x| - 1$.

\subsubsection{Activity Snippets in Evolving Networks}
\label{subsubsec:prelim-network}

\textbf{Graph or Network.} A graph or network $\graph = (\nodes, \edges)$ is a set of nodes $\nodes$, and a set of edges $\edges \subseteq \nodes \times \nodes$. 
If an edge between nodes $v_1$ and $v_2$ has a relationship type $r$, we denote it $(v_1, r, v_2)$. 
Our proposed persistence measure and algorithms apply to general networks: labeled (nodes/edges), directed, weighted, bipartite or multi-graphs.

\vspace{0.1cm}
\noindent 
\textbf{Edge-update.} An edge update $\update = (\updateType, v_1, \relation, v_2, \timestamp)$ to a network $\graph$ is the \emph{insertion} (+) or \emph{deletion} (-) of an edge $(v_1,r,v_2)$ at time $\timestamp$. 
We refer to $\update$'s timestamp with $\updateTimestamp(\update)$.

\vspace{0.1cm}
\noindent 
\textbf{Edge Stream.} An edge stream $\stream = (\update_1, \update_2, \dots)$ is a time-ordered sequence of possibly infinite edge-updates to an evolving graph $\evoGraph = (\graph_1, \graph_2, \dots)$. We denote the start time of $\stream$, $\streamStart = \updateTimestamp(\update_1)$, its length $|\stream|$, and if bounded, its end time $\streamEnd = \updateTimestamp(\update_{|\stream|})$. We call a sub-sequence of $\stream$ that consists of 
all edge-updates in the last $\windowWidth$ time units a \textbf{window} $\window$ of width $\windowWidth$.

\begin{figure}[t]
    \centering
    \includegraphics[width=0.7\columnwidth]{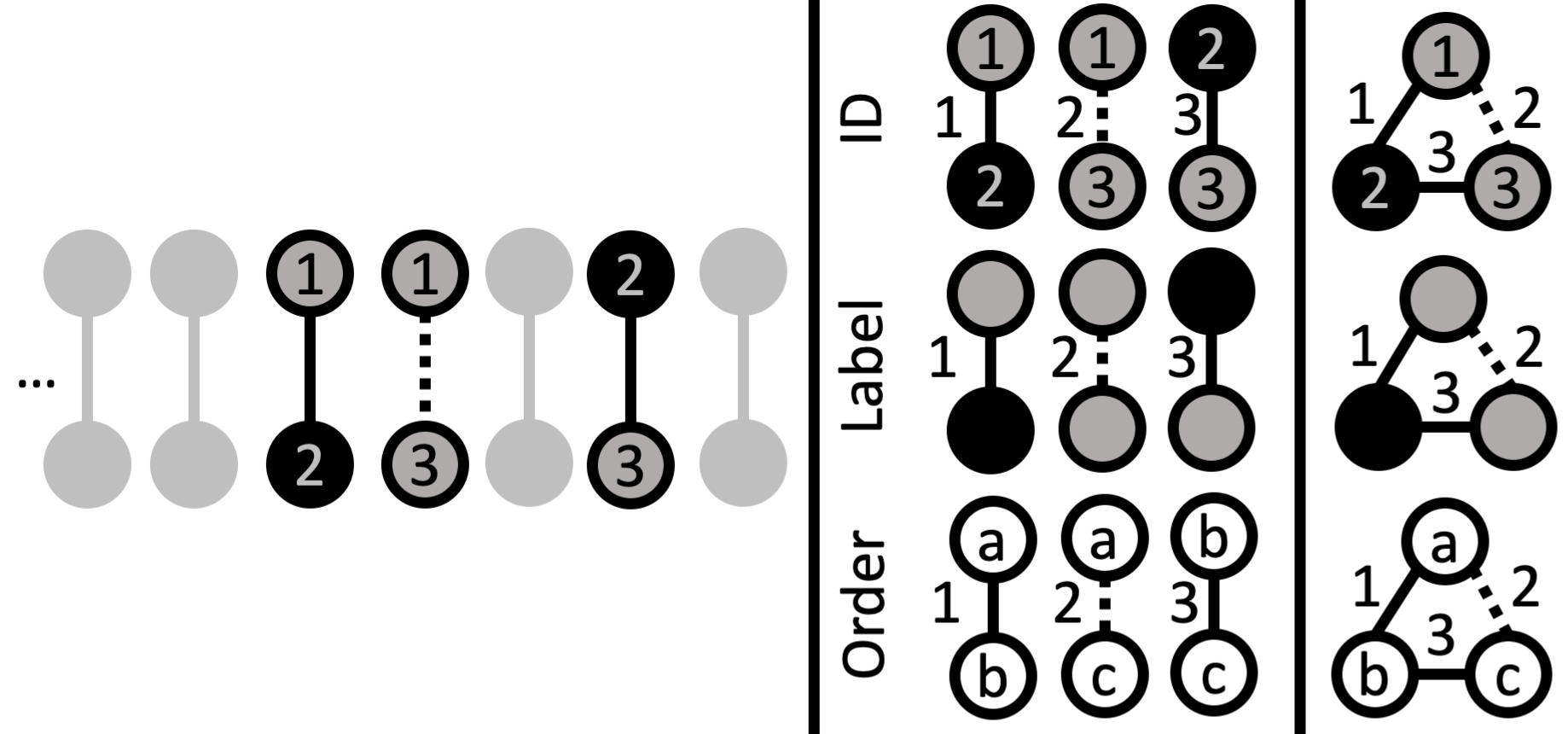}
    \vspace{-0.25cm}
    \caption{Left: An update sequence that is ordered but not contiguous. Right: The activity snippet for the sequence, depending on the view $\view$ (\idView, \labelView, or \orderView), and a graph describing the activity. Solid/dashed edges capture types of interactions (e.g., edge types or insertions vs. deletions), and the node colors denote labels.}
    \label{fig:seq-view}
    \vspace{-0.3cm}
\end{figure}

\vspace{0.1cm}
\noindent 
\textbf{Activity Snippet.} Intuitively, an activity snippet $x$ describes a sequence of activity among connected nodes in the network. Specifically, $\pattern = (\view(\update_{i}), \dots, \view(\update_{j}))$ is a sequence of ordered, but \textit{not}-necessarily contiguous, edge updates, where the node IDs  \emph{may} be replaced by a \textbf{view} $\view$: (1) their labels or (2) the position of their first occurrence in the sequence (Fig.~\ref{fig:seq-view}).
If the node IDs are \textit{not} replaced, the snippet is an exact sequence of activity. 
If the IDs are replaced by their position in the sequence, then the snippet is analogous to temporal motifs \cite{paranjape2017motifs}, but also captures edge deletions or insertions. 
The snippet has duration $\duration = \updateTimestamp(\update_{j}) - \updateTimestamp(\update_{i})$ and size $\patternSize$ equal to the number of edge-updates in the snippet. A ($\maxDuration, \maxSize$) activity snippet is an activity snippet with duration $\duration \leq \maxDuration$ and size $\patternSize \leq \maxSize$. In this work, events (\S~\ref{subsubsec:prelim-general}) are activity snippets.

\vspace{0.1cm}
\noindent \textbf{Problem Definition.} Given these definitions, we focus on the problem of persistent activity mining:
\begin{problem}[Persistent Activity Mining]\label{problem:prob1}
Given a network $\graph$ that continually evolves via an edge-stream $\stream$, measure the persistence of each activity snippet $\pattern$, i.e., for how long, how often, and how regularly it has occurred.
\end{problem}

\subsection{Properties of Persistence}
\label{subsec:pers-theory}
Based on our definitions, we present axioms and properties that a persistence measure should follow. 
In the remaining sections, we propose a principled persistence measure (\S~\ref{subsec:pers-measure}) and prove that it satisfies all the axioms and properties (\S~\ref{subsec:measure-theory}). 
Although \textbf{our theoretical definitions, properties, proposed measure and derivations are general and apply to any event $\x$ in a stream or time series}, in the context of Problem~\ref{problem:prob1}, $\x$ is an activity snippet.

\begin{definition}[Persistence Measure]
\label{def:persistence-measure}
A persistence measure $\persistenceLong : \xUniverse \times \intervalUniverse \rightarrow \mathbb{R}_{\geq 0}$
is a function that defines 
the persistence of an event $x \in \xUniverse$ in the interval $\interval \in \intervalUniverse$, and 
satisfies the following axioms:
\begin{itemize}[leftmargin=*]
    \item \axiomOne: It is non-negative, and $0$ iff there are no occurrences. Formally, $\persistenceLong \geq 0$, with equality iff $|\occs_\x| = 0$.
    \item \axiomTwo: As the interval becomes infinitely filled with unique occurrences, persistence tends to infinity.
    That is, $\lim_{|\uniqueOccs_\x| \rightarrow \infty} \persistenceLong = \infty$.
    \item \axiomThree: Shifting all occurrences in time does not affect persistence.
        Formally, $\persistence\big(\x';\interval\big) = \persistenceLong$, where $\x'$ is an event with occurrences $\occs_{\x'} = \{\occ + c : \occ \in \occs_x\}$, for some constant $c\in\mathbb{R}$ such that the shifted points remain in $\interval$, i.e, $\firstOcc + c \geq \intervalStart, \lastOcc + c \leq \intervalEnd$.
     \item \axiomFour: Shrinking the interval of measurement towards $\occInterval_\x$ leads to higher persistence. 
         Mathematically, $\persistenceLong \leq \persistence\big(x; \occIntervalsmall \big) \leq \persistence\big(x; \occInterval_\x\big)$, 
         for $t_s \leq t_s' \leq t_f$ and $t_e \geq t_e' \geq t_l$. The equality holds for $t_s = t_s' = t_f$ and $t_e = t_e' = t_l$.
         
\end{itemize}
\end{definition}

Besides these axioms, there are several properties that a good persistence measures ought to follow. For use in presenting these properties, let $\uniformx^n$ be a special class of event with $n$ occurrences, all of which are unique and uniformly spaced out over $\occInterval_{\uniformx^n}$, 
that is, $|\occs_{\uniformx^n}| = |\uniqueOccs_{\uniformx^n}| = n$ and $\gap_i = \gap_j = \frac{|\occInterval_{\uniformx^n}|}{|\gaps_{\uniformx^n}|}, \forall \gap_i,\gap_j \in \gaps_{\uniformx^n}$.

\begin{itemize}[leftmargin=*]
    \item \propOne: For two events with $n$ unique, uniformly-spaced occurrences, persistence is larger for the event with occurrences spread over a wider interval. 
    Formally, for any  ${\uniformx_1}^n$ and ${\uniformx_2}^n$ such that $|\occInterval_{{\uniformx_1}^n}| < |\occInterval_{{\uniformx_2}^n}|$, it should hold that $\persistence\big({\uniformx_1}^n;\interval\big) < \persistence\big({\uniformx_2}^n;\interval\big)$.
    \item \propTwo: For two events with unique, uniformly-spaced occurrences spread out over the same interval, persistence is larger for the event with more occurrences. 
    Mathematically, for any $\uniformx^n$ and $\uniformx^{n+k}$ such that $\occInterval_{\uniformx^n} = \occInterval_{\uniformx^{n+k}}$, the persistence measure should satisfy $\persistence\big(\uniformx^n;\interval\big) < \persistence\big(\uniformx^{n+k};\interval\big), \forall k\geq1$.
    \item \propThree: The persistence of an event with $n$ unique occurrences in an interval is maximized iff the occurrences are spread out uniformly. 
    Formally, for any $\x$ and $\uniformx^n$ such that $|\occs_\x| = |\uniqueOccs_\x| = |\uniqueOccs_{\uniformx^n}| = n$ and $\occInterval_\x = \occInterval_{\uniformx^n}$, the persistence measure should satisfy $\persistenceLong \leq \persistence\big(\uniformx^n; \interval\big)$, with equality iff $\gaps_\x = \gaps_{\uniformx^n}$, $\forall n > 2$.
\end{itemize}

\subsection{Proposed Persistence Measure}
\label{subsec:pers-measure}

\noindent \textbf{Family of Persistence Measures.} The axioms and properties point to three main components of persistence: the width of the interval in which occurrences fall, the frequency, and the distribution of occurrences. We thus propose a family of persistence measures
\begin{equation}
    \label{eq:persistence-form}
    \persistence\big(\pattern;\interval\big) \triangleq f\Big( \underbrace{\widthLong}_{\text{width}}, \underbrace{\freqLong}_{\text{frequency}}, \underbrace{\spreadLong}_{\text{spread}} \Big),
\end{equation}
\noindent where $\width(\cdot)$ is a function of the width of $\x$'s interval of occurrences $\occInterval_\x$, 
$\freq(\cdot)$ is a function of the number of occurrences, 
$\spread(\cdot)$ is a function of how uniformly the points are spread out over $\occInterval_\x$, and
$f(\cdot)$ is a function that combines these components. 

\vspace{0.1cm}
\noindent \textbf{A Persistence Measure.} There are a number of ways to construct the functions $\width(\cdot)$, $\freq(\cdot)$, and $\spread(\cdot)$, and to combine them. We propose one intuitive persistence measure within the family described above and show that it satisfies all the axioms and properties (\S~\ref{subsec:measure-theory}).

First, we define $\width(\cdot)$ as the percentage of the interval width that is covered by the occurrences of $x$:
\begin{equation}
    \label{eq:width}
    \width\big(\pattern; \interval\big) \triangleq \frac{\occIntervalWidth + 1}{\intervalWidth + 1},
\end{equation}
where we add one in the numerator and denominator so that they are non-0 when $\firstOcc = \lastOcc$ or $\intervalStart = \intervalEnd$.
For $|\occs_\x|=0$, we define $W(\cdot)=0$.

Second, we define $\freq(\cdot)$, the function of frequency, as the logarithm of the number of occurrences $|\occs_\x|$ to: (a) prevent the term from dominating over the others (which can be more than an order of mangitude smaller), 
and (b) naturally capture diminishing returns. Formally, 
\begin{equation}
    \label{eq:freq}
    \freqLong \triangleq \log_{10}\Big(|\occs_\x| + 1\Big),
\end{equation}
where we add one to ensure that the logarithm is well-defined in the absence of occurrences.

Third, for $\spread(\cdot)$, to capture how regularly the occurrences are spread, we model the distribution of the \textit{gaps} $\gaps_x = (\gap_1, \dots, \gap_{|\uniqueOccs_\x| - 1})$ in a principled way via Shannon entropy~\cite{cover2012elements}. 
Entropy measures the amount of randomness in a distribution, and when the distribution has a fixed number of outcomes in its support, this essentially means it measures the distribution's uniformity.
The gaps $\gaps_\x$ between occurrences (\S~\ref{subsubsec:prelim-general}) normalized by the interval width $|\occInterval_\x|$ define a probability mass function with entropy 
\begin{equation}
    \label{eq:entropy}
    \entropy(\gaps_\x) \triangleq - \textstyle \sum_{\gap_i \in \gaps_x} \frac{\gap_i}{\occIntervalWidth}\log\frac{\gap_i}{\occIntervalWidth}.
\end{equation}

As is standard in information theory, we define $0 \log 0 = 0$ (e.g., multiple occurrences at the same time giving $\gap_i = 0$), and we use $\log$ base 2. In order to remove the dependency on the number of occurrences (since this is captured by $\freq$) and make this a measure of spread, we normalize by the maximum entropy $\log(|\gaps_\x|)$ \cite{cover2012elements}: 
\begin{equation}
    \label{eq:spread}
    \spreadLong \triangleq
    \begin{cases}
        \frac{\entropy(\gaps_\x)}{\log |\gaps_\x|} + 1, & |\gaps_\x| > 1 \\
        1, & |\gaps_\x| \in \{0, 1\}
    \end{cases},
\end{equation}
where we add one since entropy is 0 if there are 0 or 1 gaps.

Since the terms defined above have different units, we combine them into one function as follows:  
\begin{equation}
    \label{eq:persistence}
    \persistence\big(\pattern; \interval\big) \triangleq \widthLong^\wExp \freqLong^\fExp \spreadLong^\sExp,
\end{equation}
where the finite exponents $\wExp$, $\fExp$, $\sExp \in (0,\infty)$
can be used for assigning 
different weights to the components, depending on the goals of a particular application (e.g., $\sExp > 1$ can help events with low frequency but high regularity to be discovered). 

\subsection{Theoretical Analysis}
\label{subsec:measure-theory}
\noindent We now show that Eq.~\eqref{eq:persistence} is a principled persistence measure that satisfies all axioms and properties. 
We also discuss a recently-proposed persistence heuristic, which violates key axioms. 

\begin{lemma}
    \label{lemma:lower-bounds}
    If $x$ has occurred at least once, $|\occs_\x| > 0$, then our proposed measure is positive: $\persistenceLong > 0$, $\forall \ \wExp, \fExp, \sExp \in (0,\infty)$.
\end{lemma}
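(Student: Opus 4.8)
The plan is to show that each of the three factors in Eq.~\eqref{eq:persistence} is strictly positive under the hypothesis $|\occs_\x| > 0$, and then to conclude by the elementary fact that a finite positive power of a positive real is positive and that a product of positive reals is positive. So the whole argument reduces to three short bounds, $\widthLong > 0$, $\freqLong > 0$, and $\spreadLong > 0$, followed by one line combining them.

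For the width term, since $x$ occurs at least once the quantities $\firstOcc$ and $\lastOcc$ are well defined and satisfy $\intervalStart \le \firstOcc \le \lastOcc \le \intervalEnd$, so $\occIntervalWidth = \lastOcc - \firstOcc \ge 0$ and $\intervalWidth = \intervalEnd - \intervalStart \ge 0$. By Eq.~\eqref{eq:width}, $\widthLong = \frac{\occIntervalWidth + 1}{\intervalWidth + 1} \ge \frac{1}{\intervalWidth + 1} > 0$, where the $+1$ offsets---added precisely for the cases $\firstOcc = \lastOcc$ and $\intervalStart = \intervalEnd$---keep both numerator and denominator at least $1$. For the frequency term, $|\occs_\x| > 0$ means $|\occs_\x| \ge 1$ (occurrences are an integer-sized multiset), so $|\occs_\x| + 1 \ge 2$ and Eq.~\eqref{eq:freq} gives $\freqLong = \log_{10}(|\occs_\x| + 1) \ge \log_{10} 2 > 0$.

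The spread term I would handle by a case split on $|\gaps_\x|$, matching the definition in Eq.~\eqref{eq:spread}. If $|\gaps_\x| \in \{0,1\}$, then $\spreadLong = 1 > 0$ directly. If $|\gaps_\x| > 1$, then $|\uniqueOccs_\x| \ge 3$, so there are at least two distinct timestamps, hence $\firstOcc < \lastOcc$ and $\occIntervalWidth > 0$; therefore the normalized gaps $\gap_i/\occIntervalWidth$ appearing in Eq.~\eqref{eq:entropy} are well defined, lie in $[0,1]$, and sum to $1$ (a telescoping sum over $\uniqueOccs_\x$), i.e.\ they form a valid probability mass function. Non-negativity of Shannon entropy (with the convention $0\log 0 = 0$) then gives $\entropy(\gaps_\x) \ge 0$, and $|\gaps_\x| \ge 2$ gives $\log|\gaps_\x| \ge 1 > 0$, so $\spreadLong = \frac{\entropy(\gaps_\x)}{\log|\gaps_\x|} + 1 \ge 1 > 0$. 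In both cases $\spreadLong \ge 1 > 0$. Finally, since $\widthLong, \freqLong, \spreadLong$ are all strictly positive reals and $\wExp, \fExp, \sExp \in (0,\infty)$ are finite, each of $\widthLong^\wExp$, $\freqLong^\fExp$, $\spreadLong^\sExp$ is a strictly positive real, and so is their product $\persistenceLong$.

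I do not expect a genuine obstacle here; the only thing requiring care is the handling of the degenerate configurations---a single occurrence, all occurrences at one timestamp (so $\occIntervalWidth = 0$ and the entropy normalization would otherwise divide by zero), or a point-width measurement interval---each of which is neutralized by the $+1$ offsets in Eqs.~\eqref{eq:width}--\eqref{eq:freq} and by the case split in Eq.~\eqref{eq:spread}. I would therefore be explicit about why none of these corner cases ever produces a zero or undefined factor, which is really the substance of the lemma.
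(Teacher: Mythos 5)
Your proposal is correct and follows essentially the same route as the paper's proof: show each of the three factors $\widthLong$, $\freqLong$, $\spreadLong$ is strictly positive (using the $+1$ offsets, $|\occs_\x|\geq 1$, the case split in Eq.~\eqref{eq:spread}, and non-negativity of entropy) and conclude that positive powers and products of positive reals are positive. The only cosmetic difference is that you case-split directly on $|\gaps_\x|$ rather than on $|\occs_\x|$ as the paper does, which if anything handles repeated timestamps slightly more carefully.
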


\begin{proof}
    $\widthLong > 0$ since $\intervalWidth + 1 \geq \occIntervalWidth + 1 > 0$. 
    Since 
    $|\occs_\x| > 0$, $\freqLong > 0$. 
    If $|\occs_\x| \in \{1, 2\}$, then $|\gaps_\x| \in \{0, 1\}$, so $\spreadLong = 1 > 0$ by definition. If $|\occs_\x| > 2$, $\log|\gaps_\x| > 0$ and the well-known property of entropy, $\entropy(\gaps_\x) \geq 0$ \cite{cover2012elements}, implies $\spreadLong > 0$. For $\wExp, \fExp, \sExp > 0$, each component is still greater than 0 and so is their product $\persistenceLong$.
\end{proof}

\begin{theorem}
    \label{thm:is-persistence-measure}
    $\persistenceLong$ as defined in Eq.~\eqref{eq:persistence} satisfies all the axioms, and thus is a persistence measure.
\end{theorem}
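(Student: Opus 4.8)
The plan is to verify the four axioms \axiomOne--\axiomFour\ in turn, which by Definition~\ref{def:persistence-measure} is exactly what ``is a persistence measure'' requires. Before touching any axiom I would record the structural fact that makes everything routine: $\freqLong$ and $\spreadLong$ depend only on the multiset of occurrences $\occs_\x$ — through $|\occs_\x|$ and through the gap sequence $\gaps_\x$ normalized by $|\occInterval_\x|$ — and not on the ambient interval $\interval$ at all; only $\widthLong=\frac{\occIntervalWidth+1}{\intervalWidth+1}$ sees $\interval$, and it does so monotonically, being non-increasing in $\intervalWidth$ and attaining its maximum value $1$ exactly when $\interval=\occInterval_\x$ (using $\occInterval_\x\subseteq\interval$, so equal widths force equal intervals). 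I would also note two reusable bounds: $\widthLong\ge\frac{1}{\intervalWidth+1}>0$ whenever $|\occs_\x|>0$ (since $\occIntervalWidth\ge0$), and $\spreadLong\ge1$ always, immediate from $\entropy(\gaps_\x)\ge0$ and the case split in Eq.~\eqref{eq:spread}.

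With these in hand, \axiomOne\ follows by combining Lemma~\ref{lemma:lower-bounds} (positivity when $|\occs_\x|>0$) with the observation that $|\occs_\x|=0$ forces $\widthLong=0$ by definition (and $\freqLong=\log_{10}1=0$), so the product vanishes; hence $\persistenceLong=0$ iff there are no occurrences. For \axiomTwo\ I would hold $\interval$ fixed and let $|\uniqueOccs_\x|\to\infty$: then $|\occs_\x|\ge|\uniqueOccs_\x|\to\infty$, so $\freqLong=\log_{10}(|\occs_\x|+1)\to\infty$, while $\widthLong^\wExp\ge\big(\tfrac{1}{\intervalWidth+1}\big)^\wExp$ and $\spreadLong^\sExp\ge1$ stay bounded below by fixed positive constants; since $\fExp>0$, the product diverges.

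For \axiomThree, shifting every occurrence by a constant $c$ that keeps all points in $\interval$ leaves $|\occs_\x|$ unchanged, leaves $\occIntervalWidth=\lastOcc-\firstOcc$ unchanged (hence $\widthLong$ unchanged, $\interval$ being fixed), and leaves each gap $\gap_i=t_{i+1}-t_i$ — and therefore $\entropy(\gaps_\x)$ and $\spreadLong$ — unchanged; so each factor, and thus $\persistence$, is invariant. For \axiomFour, shrinking $\interval$ toward $\occInterval_\x$ (i.e.\ $t_s\le t_s'\le t_f$ and $t_e\ge t_e'\ge t_l$) does not alter $\occs_\x$, so $\freqLong$, $\spreadLong$, and $\occIntervalWidth$ are all unaffected; the only change is in $\widthLong$, whose denominator $\intervalWidth+1$ weakly decreases, so $\widthLong$ weakly increases, reaching $1$ at $\interval=\occInterval_\x$. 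Since $\wExp>0$, the map $\width\mapsto\width^\wExp$ is increasing and the remaining factors are positive constants, yielding $\persistenceLong\le\persistence\big(x;\occIntervalsmall\big)\le\persistence\big(x;\occInterval_\x\big)$, with equality throughout precisely when the three interval widths agree, which under the stated ordering constraints happens only for $t_s=t_s'=t_f$ and $t_e=t_e'=t_l$.

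I expect the argument to be almost entirely routine once the ``which factor moves'' bookkeeping is in place; the one step that needs a little care is \axiomTwo, where one must be explicit that $\interval$ is held fixed so that $\widthLong$ and $\spreadLong$ admit interval-independent positive lower bounds and the $\freqLong$ factor alone forces divergence — and, relatedly, pinning down the equality cases in \axiomOne\ and \axiomFour\ rather than merely the inequalities.
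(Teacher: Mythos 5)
Your proposal is correct and follows essentially the same axiom-by-axiom, factor-by-factor route as the paper: \axiomOne\ via Lemma~\ref{lemma:lower-bounds}, \axiomTwo\ via divergence of the frequency term with the other factors bounded, and \axiomThree, \axiomFour\ via invariance of $\freqLong$ and $\spreadLong$ with only $\widthLong$ responding to the interval. Your only departure is a small sharpening of \axiomTwo: you supply explicit positive lower bounds $\widthLong \geq \frac{1}{\intervalWidth+1}$ and $\spreadLong \geq 1$, which is the logically needed direction, where the paper merely remarks that the bounded width and spread terms do not affect the limit.
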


\begin{proof}
    We show that $\persistenceLong$ is a persistence measure by proving that it satisfies Axioms \axiomOne-\axiomFour. Since $\alpha, \beta, \gamma \in (0,\infty)$ do not affect the results, we omit them from the proofs for readability. 
    
     $\bullet$ Axiom \textbf{\axiomOne.} If $|\occs_\x| = 0$, then $\persistenceLong = 0 \cdot 0 \cdot 1 = 0$.
    If $|\occs_\x| > 0$, then $\persistenceLong > 0$, by Lemma \ref{lemma:lower-bounds}.
    Therefore, since $|\occs_\x| \geq 0$, if $\persistenceLong = 0$, then $|\occs_\x| = 0$.
    
     $\bullet$ Axiom \textbf{\axiomTwo.} Since $\occIntervalWidth \leq \intervalWidth$, $\widthLong \leq 1$. 
     Also $\entropy(\gaps_\x) \leq \log(|\gaps_\x|)$ \cite{cover2012elements}, so $\spreadLong \leq 2$. Thus, these bounded terms do not affect the limit, and $\textstyle \lim_{|\uniqueOccs_\x| \rightarrow \infty} \log_{10}(|\occs_\x| + 1) = \infty$. 
     Thus, unaffected by $\wExp$, $\fExp$, $\sExp \in (0,\infty)$, 
     $\textstyle \lim_{|\uniqueOccs_\x| \rightarrow \infty} \persistenceLong = \infty$.
    
     $\bullet$  Axiom \textbf{\axiomThree.} First, $\width(\pattern', \interval) = \frac{\lastOcc + c - \firstOcc - c + 1}{\intervalEnd - \intervalStart + 1} = \frac{\lastOcc - \firstOcc + 1}{\intervalEnd - \intervalStart + 1} = \widthLong$. 
    Also, $|\occs_{\x'}| = |\occs_\x|$, so $\freq(\pattern'; \interval) = \freqLong$. Each new gap $\gap_i' = t_{i+1} + c - t_i - c = t_{i+1} - t_i = \gap_i$ (\S~\ref{subsubsec:prelim-general}), so shifting the occurrences does not change their spread: $\spread(\pattern'; \interval) = \spreadLong$. 
    As a result, the persistence remains the same.

     $\bullet$  Axiom \textbf{\axiomFour.} By definition, $t_s \leq t_s' \leq t_f$ and $t_e \geq t_e' \geq t_l$, 
     so $\lastOcc - \firstOcc \leq \intervalEnd' - \intervalStart' \leq \intervalEnd - \intervalStart$. This implies that $\widthLong \leq \width(\pattern;\occIntervalsmall) \leq \width(\pattern;\occInterval_\x)$, with equality iff $\firstOcc = \intervalStart$ and $\lastOcc = \intervalEnd$. Since all occurrences fall in $\occInterval_\x$ by definition, the frequency and spread terms remain the same upon shrinking the measurement interval. 
     Therefore, $\persistenceLong \leq \persistence(\pattern;\occIntervalsmall) \leq \persistence(\pattern;\occInterval_\x)$, with equality iff $\firstOcc = \intervalStart$ and $\lastOcc = \intervalEnd$. \qedhere
     
\end{proof}

\begin{theorem}
    \label{theorem:prop1}
    Our persistence measure $\persistenceLong$, Eq.~\eqref{eq:persistence}, satisfies the three desired properties \propOne-\propThree.
\end{theorem}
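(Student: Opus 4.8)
The plan is to exploit the product form of Eq.~\eqref{eq:persistence}, $\persistence\big(\pattern;\interval\big) = \widthLong^{\wExp}\,\freq\big(\pattern;\interval\big)^{\fExp}\,\spreadLong^{\sExp}$, together with Lemma~\ref{lemma:lower-bounds}: whenever $\x$ has occurred, all three factors $\width,\freq,\spread$ are strictly positive, so (since $\wExp,\fExp,\sExp\in(0,\infty)$) $\persistence$ is a \emph{strictly increasing} function of each component in isolation, with the other two held fixed, and positivity of the fixed factors transfers strict (resp.\ weak) monotonicity in a varying component through to the product. So for each property I would (i) check which of $\width,\freq,\spread$ are kept constant by the hypotheses on the intervals and counts, and (ii) argue monotonicity in the remaining component(s).

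For \propOne, both ${\uniformx_1}^{n}$ and ${\uniformx_2}^{n}$ have $n$ unique occurrences, so $\freq$ agrees; both have equal gaps, so $\entropy(\gaps)=\log|\gaps|$ and $\spread$ agrees ($2$ if $n>2$, $1$ if $n\le 2$). Only $\width$ varies, and $\widthLong=\frac{\occIntervalWidth+1}{\intervalWidth+1}$ is strictly increasing in $\occIntervalWidth$ with a common denominator (the measurement interval is shared), so $|\occInterval_{{\uniformx_1}^n}|<|\occInterval_{{\uniformx_2}^n}|$ forces $\width({\uniformx_1}^{n};\interval)<\width({\uniformx_2}^{n};\interval)$ and hence $\persistence\big({\uniformx_1}^{n};\interval\big)<\persistence\big({\uniformx_2}^{n};\interval\big)$. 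For \propTwo the occurrence interval and the measurement interval are shared, so $\width$ is unchanged; $\freq\big(\pattern;\interval\big)=\log_{10}(|\occs_\x|+1)$ is strictly increasing and positive, so passing from $n$ to $n+k$ with $k\ge1$ strictly increases it; and $\spread$ is nondecreasing there (pinned at $2$ once $|\gaps|>1$, at worst jumping from $1$ to $2$) and stays positive, so the product strictly increases.

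The substantive case is \propThree. Since $|\occs_\x|=|\uniqueOccs_\x|=n$ and $\occInterval_\x$ is fixed, $\width$ and $\freq$ are the same for $\x$ and $\uniformx^{n}$, so $\persistence\big(\x;\interval\big)\le\persistence\big(\uniformx^{n};\interval\big)$ iff $\spreadLong\le\spread\big(\uniformx^{n};\interval\big)$, which (as $|\gaps_\x|=n-1$ is fixed) reduces to $\entropy(\gaps_\x)\le\entropy(\gaps_{\uniformx^{n}})$. Because the $n$ occurrences are distinct, every gap is strictly positive, and $\{\gap_i/\occIntervalWidth\}_{i=1}^{n-1}$ is a genuine probability mass function on $n-1$ atoms (positive, summing to $(\lastOcc-\firstOcc)/\occIntervalWidth=1$). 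I would then invoke the classical bound that the entropy of a distribution on $m$ atoms is at most $\log m$, with equality iff the distribution is uniform~\cite{cover2012elements}; here $m=n-1>1$---precisely why the property assumes $n>2$, so that $\spread$ is not the constant $1$---so $\entropy(\gaps_\x)$ is maximized exactly when $\gap_i/\occIntervalWidth=1/(n-1)$ for all $i$, i.e.\ when $\gaps_\x=\gaps_{\uniformx^{n}}$, giving the inequality with the stated equality condition.

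The step I expect to need the most care is this last one: confirming that the maximum-entropy equality case is literally the definition of $\uniformx^{n}$ (equal gaps spanning $\occInterval_\x$), and that the uniqueness hypotheses exclude the degenerate regimes ($\occIntervalWidth=0$, or coinciding occurrences producing a zero gap) in which the normalized gaps would not form a well-defined, strictly positive probability vector. The $n\le2$ boundaries in \propOne and \propTwo are harmless since $\spread$ is constant there, so no further case analysis is needed.
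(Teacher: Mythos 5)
Your proposal is correct and takes essentially the same route as the paper: compare the factors $\width(\cdot)$, $\freq(\cdot)$, $\spread(\cdot)$ componentwise, noting which are pinned by each property's hypotheses, and for \propThree{} invoke the maximum-entropy bound $\entropy(\gaps_\x) \leq \log|\gaps_\x|$ with equality iff the normalized gaps are uniform, i.e.\ $\gaps_\x = \gaps_{\uniformx^n}$. The only (cosmetic) difference is that the paper also dispatches the degenerate $n=0$ case of \propTwo{} explicitly via Axiom \axiomOne{} (there $\width$ is $0$ rather than ``unchanged''), a boundary case your monotonicity phrasing glosses over but which does not affect correctness.
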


\begin{proof}
    We prove each property separately:
    
    $\bullet$  Property \textbf{\propOne.} 
    First, since $|\occs_{{\uniformx_1}^n}| = |\occs_{{\uniformx_2}^n}| = n$, $\freq\big({\uniformx_1}^n;\interval\big) = \freq\big({\uniformx_2}^n;\interval\big)$. Also, for 
    $|\gaps_{{\uniformx_1}^n}| = |\gaps_{{\uniformx_2}^n}| = n - 1$ uniform gaps, $\entropy(\gaps_{{\uniformx_1}^n}) = \entropy(\gaps_{{\uniformx_1}^n}) = \log(n - 1)$. Thus, $\spread\big({\uniformx_1}^n;\interval\big) = \spread\big({\uniformx_2}^n;\interval\big)$. 
    Since $|\occInterval_{{\uniformx_1}^n}| < |\occInterval_{{\uniformx_2}^n}|$ and $\intervalWidth$ is fixed, $\width({\uniformx_1}^n;\interval) < \width\big({\uniformx_2}^n;\intervalWidth\big)$. Therefore, $\persistence\big({\uniformx_1}^n;\interval\big) < \persistence\big({\uniformx_2}^n;\interval\big)$.

    {
    $\bullet$ Property \textbf{\propTwo.} We prove this by case. For $n = 0$, $\persistence(\uniformx^0;\interval) = 0 < \persistence(\uniformx^{k};\interval)$ by Axiom~\axiomOne. 
    For $n \geq 1$,
    $\width(\uniformx^n;\interval) = \width(\uniformx^{n+k};\interval)$, since $\occInterval_{\uniformx^n} = \occInterval_{\uniformx^{n+k}}$.
    If $n = 1$, then $\spread(\uniformx^1;\interval) = 1 \leq \spread(\uniformx^{1+k};\interval) \leq 2$ and $\freq\big(\uniformx^1;\interval\big) = \log_{10}(2) < \log_{10}(2+k) = \freq\big(\uniformx^{1+k};\interval\big)$.
    For $n > 1$, because of uniform gaps, $\entropy(\gaps_{\uniformx^n}) = \log(|\gaps_{\uniformx^n}|)$ and $\entropy(\gaps_{\uniformx^{n+k}}) = \log(|\gaps_{\uniformx^{n+k}}|)$ \cite{cover2012elements}.  
    Thus, $\spread(\uniformx^{n};\interval) = \spread\big(\uniformx^{n+k};\interval\big) = 2$. 
    Lastly, $\freq\big(\uniformx^n;\interval\big) < \freq(\uniformx^{n+k};\interval)$. 
    By combining the three terms, we prove the claim. }
    
     $\bullet$  Property \textbf{\propThree.} Since $|\occs_\x| = |\occs_{\uniformx^n}|$, $\freqLong = \freq\big(\uniformx^n;\interval\big)$. 
     Also, $\occInterval_\x = \occInterval_{\uniformx^n}$, so $\widthLong = \width\big(\uniformx^n;\interval\big)$. 
     Moreover, $\entropy(\gaps_\x) \leq \log(|\gaps_\x|)$, with equality iff the $\gap_i \in \gaps_\x$'s are uniform. 
     Thus, $\spreadLong \leq 2 = \spread\big(\uniformx^n;\interval\big)$, with equality iff the gaps $\gap_i \in \gaps_\x$'s are uniform, in which case $\gaps_\x = \gaps_{\uniformx_n}$. \qedhere

\end{proof}

\noindent \textbf{Persistence Heuristic in Data Streams~\cite{dai2016finding}.} 
A simple persistence heuristic was proposed in \cite{dai2016finding}: an item in a data stream is considered persistent if it occurs at least once in a large number of \textit{predefined}, equally-sized observation periods (intervals).
We show that this heuristic violates axioms \axiomTwo-\axiomFour\ via counter-examples. As the number of unique occurrences tends to infinity, the heuristic tends to the number of intervals,
not infinity (\axiomTwo\ violation). If an item has two occurrences in the same interval, they can be shifted such that one falls in a new period, in which case the heuristic 
grows (\axiomThree\ violation). 
Finally, for an item with two occurrences in different intervals, after 
shrinking $\interval$ towards $\occInterval_\x$ and re-dividing into the predefined number of intervals, they could still fall in different intervals. 
Thus persistence would not increase (\axiomFour\ violation).

\section{Proposed Method: \method}
\label{sec:method}

With our proposed persistence measure in Eq.~\eqref{eq:persistence}, and the definitions in \S~\ref{subsec:prelim}, we can now define the offline and streaming (online) versions of our problem precisely. 

\begin{problem}[Offline/Streaming Persistent Activity Mining]\label{problem:prob-formal}
Given an edge stream, $\stream$, a maximum duration $\maxDuration$, and a maximum snippet size $\maxSize$, 
\begin{itemize}
    \item {\emph{[}Offline\emph{]}} \textbf{output} the persistence $\persistence\big(\pattern;[\streamStart,\streamEnd]\big)$
    \item {\emph{[}Streaming\emph{]}} \textbf{maintain} the persistence $\persistence(\pattern;[\streamStart,\timestamp])$
\end{itemize}
 of every \emph{(}$\maxDuration, \maxSize$\emph{)}-activity snippet $\pattern$ observed in the whole stream. 
\end{problem}

We next introduce our offline and streaming algorithms, \methodOffline and \methodOnline, to solve this problem. 
We first discuss extracting all $(\maxDuration,\maxSize)$-snippets from a stream $\stream$, followed by the details of each variant.
We give the main steps for \method, which takes as a parameter which variant to use, in Algorithm~\ref{alg:method}, and detailed pseudocode for reproducibility in \S~\ref{subsec:detailed-pseudocode}.

\vspace{0.1cm}
\label{subsubsec:pattern-extraction}
\noindent \textbf{Activity Snippet Extraction (line 2).} \label{subsubsec:alg-extract-patterns} Each time a new edge-update $\newUpdate$ arrives (at time $\timestamp$), the procedure \textsc{ExtractNewSnippets} is called. The procedure maintains the window $\window$, which consists of all updates $\update$ 
that occurred within the last $\maxDuration$ time units, and removes all others. 
It then adds the new update $\newUpdate$ (which is 0 time-units in the past). In addition to maintaining the window, the procedure extracts all valid snippets from the window. A valid snippet must be connected, have duration $\duration \leq \maxDuration$, and size $\patternSize \leq \maxSize$.  Since all stale updates have been removed from $\window$, $\maxDuration$ is already enforced. Any new snippet instance must contain $\newUpdate$.  
The singleton snippet containing just $\newUpdate$ is created, and then snippets of size $\patternSize = 2, \dots, \maxSize$ are constructed smallest to largest, such that the nodes in each snippet are connected.

\vspace{0.1cm}
\noindent \textbf{Offline Algorithm (lines 4 and 8).}
\label{subsubsec:offline}
For the offline version of Problem~\ref{problem:prob-formal}, \methodOffline maintains the set of occurrences of each activity snippet extracted from the stream. Then, when the end of the stream is reached, it computes and outputs the persistence of each snippet $\pattern$ in $[\streamStart, \streamEnd]$ with Eq.~\eqref{eq:persistence} as $\persistence(\pattern;[\streamStart, \streamEnd])$.

\begin{algorithm}[t]
	\caption{\method($\stream$, $\maxDuration$, $\maxSize$, $\view$, $\wExp$, $\fExp$, $\sExp$, \textsc{variant})}
	\label{alg:method}
	\begin{algorithmic}[1]
	    \small
	    \Statex \textbf{Input}: Stream $\stream$, max snippet duration $\maxDuration$ and size $\maxSize$, view $\view$, persistence exponents $\wExp$, $\fExp$, $\sExp$, the variant (\methodOffline/\methodOnline).
	    \While{$\newUpdate \in \stream$} \Comment{\textcolor{gray}{While there is a new update in the stream}}
	       \For{$\pattern \in \Call{ExtractNewSnippets}{\window, \newUpdate, \timestamp, \view}$}
	            \If{\textsc{variant} is \methodOffline}
	               \State{Add the occurrence $\timestamp$ of $\pattern$ to $\occs_\pattern$.}
	            \Else
	               \State Update $\persistence(\pattern;[\streamStart;t])$ incrementally.
	            \EndIf
	        \EndFor
	    \EndWhile
	    \If{\textsc{variant} is \methodOffline}
	        \State Compute $\persistence(\pattern;[\streamStart, \streamEnd])$ for each $\pattern$ observed.
	    \EndIf
	\end{algorithmic}
	\vspace{-0.1cm}
\end{algorithm}

\vspace{0.1cm}
\noindent \textbf{Streaming Algorithm (line 6).}  
\label{subsubsec:online}
For the online version of Problem~\ref{problem:prob-formal}, there are two cases to handle: (1) update the persistence of snippet $\pattern$ when it occurs at time $\timestamp$, and (2) return the correct persistence of a snippet $\pattern$ if it is queried at any other time $\timestamp$. We maintain in memory a constant amount of information on each snippet: the total number of its occurrences, $|\occs_\x|$, the number of gaps between \vspace{-0.1cm}occurrences, $|\gaps_\x|$, the time of its first and last occurrences, $\firstOcc$, $\lastOcc$, and its persistence when it last occurred, $\persistence(\pattern;[\streamStart,\lastOcc])$. 

$\bullet$ \textbf{Case 1. Updating Persistence Upon Occurrence.} 
Since there is a new occurrence at the current time $\timestamp$ and we maintain the first occurrence of each snippet, we can compute the width function $\widthLong^\wExp = \left(\frac{|[\firstOcc, \timestamp]_\pattern| + 1}{|[\streamStart,\timestamp]| + 1}\right)^\wExp$ from Eq.~\eqref{eq:width}. Since we maintain the number of occurrences, $|\occs_\pattern|$, we can obtain the frequency  $\freq\big(\pattern;[\streamStart;\timestamp]\big)^\fExp = \log_{10}\big(|\occs_\pattern| + 1\big)^\fExp$ from Eq.~\eqref{eq:freq}. By maintaining the number of gaps between occurrences, we know whether $|\gaps_\pattern| \in \{0, 1\}$ (in Eq.~\eqref{eq:spread}) and can compute $\log|\gaps_\pattern|$. In order to compute $\entropy(\gaps_\pattern)$---Eq.~\eqref{eq:entropy}---we show how the entropy 
of the distribution induced by a snippet $\pattern$'s gap widths
can be computed incrementally, as new occurrences create new gaps. 
Specifically, the entropy 
when a new gap is formed can be computed from (1)~the previous entropy,
(2)~the previous normalizing constant $Z$,
and (3)~the new gap $\gap_{n+1}$.
This is stated in Thm.~\ref{thm:entropy-incremental} and proved via the following two Lemmas. 

Let $p_i \triangleq \frac{\gap_i}{Z}$, where $Z = \sum_{i=1}^n \gap_i$, be a probability mass function on a set of $n$ gaps, $\gaps_\x$, induced by normalizing each gap by $Z$.
\begin{lemma}
    \label{lemma:entropy-n}
    The entropy $\entropy(p) = \log Z - \frac{1}{Z}\sum_{i=1}^n \gap_i\log \gap_i.$
\end{lemma}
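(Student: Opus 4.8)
The plan is to prove this directly from the definition of Shannon entropy by a short algebraic manipulation. Starting from $\entropy(p) = -\sum_{i=1}^n p_i \log p_i$ and substituting $p_i = \frac{\gap_i}{Z}$, I would expand $\log\frac{\gap_i}{Z} = \log \gap_i - \log Z$ using the quotient rule for logarithms, which splits the sum into two pieces.

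The first step is to write $\entropy(p) = -\sum_{i=1}^n \frac{\gap_i}{Z}\big(\log \gap_i - \log Z\big) = -\frac{1}{Z}\sum_{i=1}^n \gap_i \log \gap_i + \frac{\log Z}{Z}\sum_{i=1}^n \gap_i$. The second step is to observe that $\sum_{i=1}^n \gap_i = Z$ by the definition of $Z$, so the second term collapses to $\frac{\log Z}{Z}\cdot Z = \log Z$. Combining, $\entropy(p) = \log Z - \frac{1}{Z}\sum_{i=1}^n \gap_i \log \gap_i$, as claimed.

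The only point needing a remark is the treatment of zero-width gaps (which arise from coincident occurrences, as noted in the text after Eq.~\eqref{eq:entropy}): for such a term $p_i = 0$, and the convention $0\log 0 = 0$ is used on the left-hand side, while $\gap_i \log \gap_i = 0$ under the same convention on the right-hand side, so the identity is unaffected. There is no real obstacle here — the statement is essentially the rewriting of entropy in terms of unnormalized weights and their normalizer, and the ``hard part'' is merely being careful that the $0\log 0$ convention is applied consistently on both sides so that the algebra goes through termwise.
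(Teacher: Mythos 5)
Your proposal is correct and follows essentially the same route as the paper, which proves the lemma by algebraically expanding the entropy definition with $p_i = \gap_i/Z$ and using $Z = \sum_{i=1}^n \gap_i$; your explicit log-quotient expansion and collapse of the second sum is exactly that computation. Your added remark that the $0\log 0$ convention applies consistently on both sides is a sound (if minor) extra precaution beyond what the paper states.
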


\begin{lemma}
    \label{lemma:entropy-n-1}
    For the new set of gaps ${\gaps'}_\x = \gaps_\x \cup \{\gap_{n+1}\}$, and normalizing constant $Z' = Z + \gap_{n+1}$, the corresponding pmf $p'$ has entropy $\entropy(p') = \frac{Z}{Z'}\log Z' -\frac{\gap_{n+1}}{Z'}\log\frac{\gap_{n+1}}{Z'} - \frac{1}{Z'}\sum_{i=1}^n \gap_i\log \gap_i$.
\end{lemma}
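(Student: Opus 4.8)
The plan is to prove Lemma~\ref{lemma:entropy-n-1} by a direct expansion of the definition of Shannon entropy for the new pmf $p'$, followed by elementary algebra; no heavy machinery is required. First I would write $\entropy(p') = -\sum_{i=1}^{n+1} p'_i \log p'_i$ with $p'_i = \gap_i/Z'$ and $Z' = Z + \gap_{n+1}$, and peel off the $(n+1)$-th summand, leaving $\entropy(p') = -\sum_{i=1}^{n} \frac{\gap_i}{Z'}\log\frac{\gap_i}{Z'} - \frac{\gap_{n+1}}{Z'}\log\frac{\gap_{n+1}}{Z'}$. The second term already matches one of the terms in the target formula, so only the first sum needs work.

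Next I would simplify the sum over $i = 1,\dots,n$ by splitting the logarithm as $\log\frac{\gap_i}{Z'} = \log \gap_i - \log Z'$, which turns it into $-\frac{1}{Z'}\sum_{i=1}^n \gap_i\log\gap_i + \frac{\log Z'}{Z'}\sum_{i=1}^n \gap_i$. Using $\sum_{i=1}^n \gap_i = Z$ (the old normalizing constant), the second piece collapses to $\frac{Z}{Z'}\log Z'$. Collecting everything then gives exactly $\entropy(p') = \frac{Z}{Z'}\log Z' - \frac{\gap_{n+1}}{Z'}\log\frac{\gap_{n+1}}{Z'} - \frac{1}{Z'}\sum_{i=1}^n \gap_i\log\gap_i$, as claimed.

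The only point needing a word of care, rather than a genuine obstacle, is degeneracy: some gaps may be zero (multiple occurrences at the same timestamp), so I would invoke the standard convention $0\log 0 = 0$ already used in Eq.~\eqref{eq:entropy} to keep the terms $\gap_i\log\gap_i$ well-defined, and note that $Z' > 0$ holds whenever at least one gap is positive, which is precisely the regime in which $p'$ is a valid pmf. No appeal to Lemma~\ref{lemma:entropy-n} is strictly needed for this statement on its own; I expect instead that the subsequent Thm.~\ref{thm:entropy-incremental} will substitute the identity $\sum_{i=1}^n \gap_i\log\gap_i = Z(\log Z - \entropy(p))$ obtained from Lemma~\ref{lemma:entropy-n} into this formula, yielding $\entropy(p')$ purely in terms of the previous entropy $\entropy(p)$, the previous constant $Z$, and the new gap $\gap_{n+1}$ — the incremental update the method actually uses.
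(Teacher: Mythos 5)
Your proposal is correct and follows the same route as the paper, which proves Lemma~\ref{lemma:entropy-n-1} by algebraically expanding the entropy definition and using $Z = \sum_{i=1}^n \gap_i$; your write-up simply spells out the expansion (peeling off the $(n+1)$-th term and splitting the logarithm) that the paper leaves implicit. The remark about the $0\log 0$ convention and the role of Lemma~\ref{lemma:entropy-n} only in Thm.~\ref{thm:entropy-incremental} is also consistent with the paper.
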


\begin{proof}
    Both lemmas can be derived by algebraically expanding the corresponding entropy definition and using $Z = \sum_{i=1}^n \gap_i$. 
\end{proof}

\begin{theorem}
    \label{thm:entropy-incremental}
    $\entropy(p') = \entropy(p) + \frac{Z}{Z'}\log Z' -\log Z -\frac{\gap_{n+1}}{Z'}\log\frac{\gap_{n+1}}{Z'} + \left(\frac{1}{Z}-\frac{1}{Z'}\right)(\log Z - \entropy(p))Z$, where $\entropy(p)$ is the entropy of $p$, $Z$ is the normalizing constant for $p$, and $\gap_{n+1}$ is the new gap.
\end{theorem}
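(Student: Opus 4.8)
The plan is to obtain the claimed recurrence by a short algebraic manipulation of Lemmas~\ref{lemma:entropy-n} and~\ref{lemma:entropy-n-1}, which have already done the hard expansion work. Inspecting the formula for $\entropy(p')$ in Lemma~\ref{lemma:entropy-n-1}, every term is one of the quantities we maintain incrementally ($Z$, $Z' = Z + \gap_{n+1}$, $\gap_{n+1}$) \emph{except} the sum $\sum_{i=1}^n \gap_i \log \gap_i$. The key observation is that Lemma~\ref{lemma:entropy-n} lets us eliminate exactly this sum in favor of $\entropy(p)$ and $Z$.

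First I would rearrange Lemma~\ref{lemma:entropy-n} to solve for the troublesome sum, obtaining $\sum_{i=1}^n \gap_i \log \gap_i = Z\big(\log Z - \entropy(p)\big)$. Substituting this into the right-hand side of Lemma~\ref{lemma:entropy-n-1} yields $\entropy(p') = \tfrac{Z}{Z'}\log Z' - \tfrac{\gap_{n+1}}{Z'}\log\tfrac{\gap_{n+1}}{Z'} - \tfrac{Z}{Z'}\big(\log Z - \entropy(p)\big)$, which already expresses $\entropy(p')$ using only maintained quantities. Second, I would regroup this into the exact additive form stated in the theorem, which is the natural shape for an incremental update (a ``starting value'' $\entropy(p)$ plus corrections). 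Writing $-\tfrac{Z}{Z'}\big(\log Z - \entropy(p)\big) = -\big(\log Z - \entropy(p)\big) + \big(1 - \tfrac{Z}{Z'}\big)\big(\log Z - \entropy(p)\big)$ and using the identity $1 - \tfrac{Z}{Z'} = \big(\tfrac{1}{Z} - \tfrac{1}{Z'}\big)Z$, the first piece supplies the $+\entropy(p)$ and $-\log Z$ terms in the statement, and the second piece becomes precisely the $\big(\tfrac{1}{Z}-\tfrac{1}{Z'}\big)(\log Z - \entropy(p))Z$ term; collecting everything reproduces the displayed equation.

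There is essentially no obstacle beyond careful bookkeeping, since both lemmas are granted and the argument is a one-line substitution followed by a regrouping; the ``main obstacle'' is merely verifying that the rearranged terms land in exactly the claimed form rather than an equivalent-looking variant. The one point worth checking explicitly is that the conventions fixed in \S~\ref{subsec:pers-measure}---a common logarithm base and $0\log 0 = 0$---make the recurrence degrade gracefully when $\gap_{n+1} = 0$ (then $Z' = Z$, the term $\tfrac{\gap_{n+1}}{Z'}\log\tfrac{\gap_{n+1}}{Z'}$ vanishes, and the formula collapses to $\entropy(p') = \entropy(p)$, as it must since a zero-width gap contributes no new probability mass). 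Since those conventions are the same ones already in force, this is immediate and no separate case analysis is needed.
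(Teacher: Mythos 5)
Your proposal is correct and follows essentially the same route as the paper: both proofs use Lemma~\ref{lemma:entropy-n} to eliminate the sum $\sum_{i=1}^n \gap_i\log\gap_i$ in favor of $Z(\log Z - \entropy(p))$ and combine it with Lemma~\ref{lemma:entropy-n-1}, the only difference being that the paper first forms the difference $\Delta = \entropy(p')-\entropy(p)$ and then substitutes, while you substitute first and then regroup into the additive form. The algebra checks out, so this is just the same one-line argument in a different order.
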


\begin{proof}
    Let $\Delta = \entropy(p') - \entropy(p)$ be the change in entropy and  $X = \sum_{i=1}^n \gap_i\log \gap_i$.
    By Lemmas~\ref{lemma:entropy-n}-\ref{lemma:entropy-n-1}, we obtain $\Delta = \frac{Z}{Z'}\log Z' -\log Z - \frac{\gap_{n+1}}{Z'}\log\frac{\gap_{n+1}}{Z'} + \left(\frac{1}{Z}-\frac{1}{Z'}\right)X$. 
    By Lemma~\ref{lemma:entropy-n}, 
    $X = (\log Z - \entropy(p))Z$. 
    Plugging, this into $\entropy(p') = \entropy(p) + \Delta$ completes the proof.
\end{proof}

The new gap $\gap_{n+1}$ is the time from the previous occurrence of $\pattern$ (which we know since we maintain $\lastOcc$) to the current time $\timestamp$. The normalizing constant $Z$ is the time from the first occurrence ($\firstOcc$) to the previous occurrence ($\lastOcc$), both of which we maintain.

$\bullet$ \textbf{Case 2. Querying Persistence 
Without a New Occurrence.}
Without a new occurrence, the frequency
$\freq(\pattern;[\streamStart,\timestamp])^\fExp$ and 
spread 
$\spread(\pattern;[\streamStart,$$\timestamp])^\sExp$ do not change. From Eq.~\eqref{eq:width}, we compute the width function,   $\width(\pattern;[\streamStart,\timestamp])^\wExp$, by replacing $|[\streamStart,\lastOcc]|$ with $|[\streamStart,\timestamp]|$ in the denominator.

\section{Evaluation}
\label{sec:eval}
We investigate the following research questions across multiple real networks (\S~\ref{subsec:data}):

\begin{itemize}
    \item \rOne{} What does the relationship between frequency and persistence reveal about activity and networks? 
    \item \rTwo{} Can \methodOnline{} find anomalies in real-time?
    \item \rThree{} Can \methodOnline{} process updates to a network at least as quickly as they arrive, and how does \methodOffline{} scale with the number of edge updates in the stream?
\end{itemize}

In \S~\ref{subsec:p_vs_f} we investigate a core value of measuring persistence, which is the relationship between frequency and persistence. 
Specifically, we find that activity snippets with high persistence, but low frequency
tend to correspond to \emph{subtle} yet regular activity, which would be missed by measuring frequency alone. On the other hand, snippets with high frequency, but low persistence 
tend to be \emph{bursty}. 
In \S~\ref{subsec:realtime}, we show how we can make analogous insights in real-time with \methodOnline{}, and \textbf{accurately} find both subtle and bursty anomalies right when they occur. Finally, in \S~\ref{subsec:scale}, we evaluate the maximum duration $\maxDuration$ and maximum size $\maxSize$ parameters, and demonstrate that \methodOnline{} processes edges in each stream 10K to 360K times faster than the rate of that stream, and \methodOffline{} scales linearly with the number of edge-updates processed.

\subsection{Data}
\label{subsec:data}

\begin{table}[b!]
\vspace{-0.2cm}
\small 
    \centering
    \caption{Description of edge streams: number of edge-updates, number of nodes, whether it has edge deletions, number of unique edges, number of edge types, and average rate of the stream in updates/sec.}
    \label{table:stats}
    \vspace{-0.35cm}
    \resizebox{\columnwidth}{!}{
        \begin{tabular}{l r r r r r r r}
        \toprule
          & $\streamLen$ & $\numNodes$ & Del & $\numEdges$ & $\edges$ types & rate ($\update$/sec)  \\
        \midrule
        \euemail & 332,334 & 986 & N & 24,929 & 1 & 0.04\\
        
        \columbus & 534,998 & 74 & Y & 2,951 & 1 & 0.02\\

        \reddit & 858,488 & 67,180 & N & 339,643 & 1 & 0.02\\
        
        Darpa IP & 4,554,344 & 25,525 & N & 68,910 & 1 & 78.5\\

        Boston Bike & 17,421,182 & 476 & Y & 81,508 & 1 & 0.05\\

        Chicago Bike & 33,331,104 & 712 & N & 171,651 & 1 & 0.25 \\
        
        Stackoverflow & 63,497,050 & 2,601,977 & N & 36,233,450 & 3 & 0.27 \\

        NYC Taxi & 3,077,990,404 & 265 & Y & 60,750 & 1 & 9.29\\
        \bottomrule
        \end{tabular}
        }
\end{table}

We utilize a diverse set of evolving networks (Tab.~\ref{table:stats}), including communication, transportation, computer, and social networks. 
\begin{itemize}[leftmargin=*]
    \item The communication network \euemail has timestamped edges denoting emails sent within a European research institute \cite{snapnets}. 
    \item \columbus, \chicago, and \boston are networks encoding bike trips made in the bike-share systems in those three cities \cite{bikedatasets}. Each node is a bike station. When a bike leaves one station for another, an edge is inserted into the network; when the bike arrives at its destination, it is deleted. Thus, the networks capture \emph{en route} bike-trips.
    \item Similarly, \NYC \cite{taxidataset} captures \emph{en route} taxi trips in New York City, but in this case, nodes are city zones rather than stations.
    \item Edges in the social network \reddit correspond to timestamped references between subreddits (topical discussion boards) \cite{snapnets}. 
    \item \darpa \cite{lippmann1999results} is an IP-IP network, where both normal network traffic and malicious attack traffic is present. Edges denote interactions between computers in the network.
    \item \stackoverflow \cite{snapnets} consists of interactions among users on the website Stackoverflow. The interactions are between users, and can be answers to questions, comments on questions, or comments on answers.
\end{itemize}

\subsection{\rOne{}: Persistent vs. Frequent}
\label{subsec:p_vs_f}

{The relationship between frequency and persistence allows for a more complete view of activity in networks.}
We show this via a Persistence vs. Frequency, or \textbf{\pf}, plot of activity snippets, with snippet frequency, $|\occs_\pattern|$, in log-scale on the \yaxis, and persistence $\persistence(\pattern;[\streamStart,\streamEnd])$ on the \xaxis. 
Activity snippets with unusually high persistence relative to their frequency fall towards the lower-right of the plot. Those with unusually high frequency relative to their persistence fall towards the upper-left. We call the former type of activity snippet \textbf{subtly persistent}, and the latter \textbf{bursty} (whether a single burst or periodic bursts). We analyze persistent, subtly persistent, and bursty activity snippets in several 
networks via \pf plots, and show what they correspond to in each context.
{To see what the baseline of frequency would give, the variation of points along the \yaxis only can be studied.}

\subsubsection{Transportation Networks}
\label{subsubsec:transportation-networks}

\textbf{Setup.} We generate \pf plots for bike trips in Boston, MA, Columbus, OH, and Taxi trips in New York City (NYC) for the two weeks surrounding Hurricane Sandy (10/22/12-11/05/12). We use view $\view = \idView$. Since once a trip is complete the rider leaves the bike/taxi, sequences of edges are not linked. Thus, we set $\maxSize = 1$ (and $\maxDuration=1$), so that each activity snippet corresponds to a single bike/taxi trip. Next, in social networks, we use $\maxSize > 1$. 
We set $\wExp, \fExp$ and $\sExp$ for visual clarity, and discuss how in the supplement on reproducibility (\S~\ref{subsec:setting-exponents}).

\vspace{0.1cm}
\noindent  \textbf{Results.} Results for \boston and \columbus are in Fig.~\ref{fig:bikes}. The most persistent bike trips (black) are both frequent and persistent. A representative timeline from each, where each tick is an occurrence of the snippet, is shown in the lower-right of each plot. In Boston, the most persistent trip is from Massachusetts Ave. in front of MIT, up the street to outside 
the Central subway station. This is a reasonable route for commuters to bike from MIT to the subway. In both networks, bursts (orange) reveal new bike stations opening. 
The new station in Boston opened right in front of MIT and became immediately popular, which led to a large burst of activity.
With the station now established, its persistence should grow over time. 
For comparison, we show a representative point from the middle of the Columbus \pf plot, and a trip in Boston that is more persistent than expected given its frequency (blue).

\begin{figure}[t!]
    \centering
    \includegraphics[width=\linewidth]{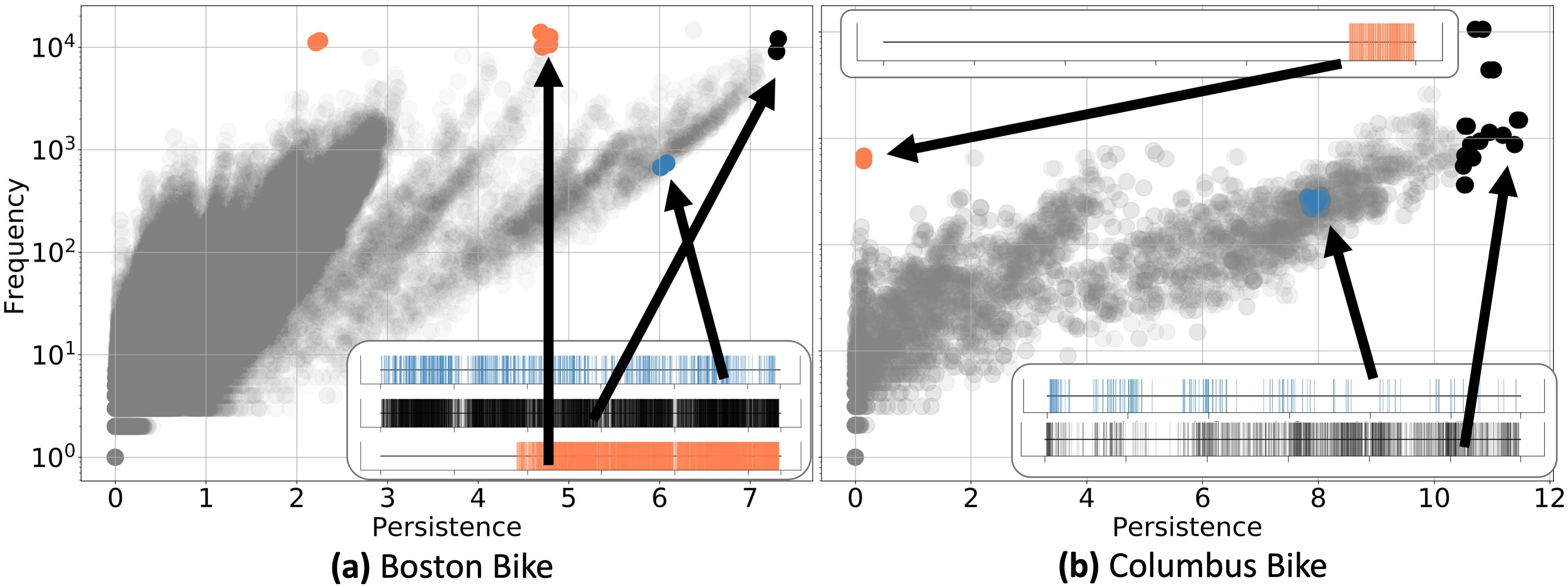}
    \vspace{-0.45cm}
    \caption{Boston (a) and Columbus (b) bike networks. Representative timelines from various parts of the \pf plots demonstrate how persistent, bursty, and subtle activity can be identified.}
    \label{fig:bikes}
    \vspace{-0.2cm}
\end{figure}

\begin{figure}[t!]
    \centering
    \includegraphics[width=\linewidth]{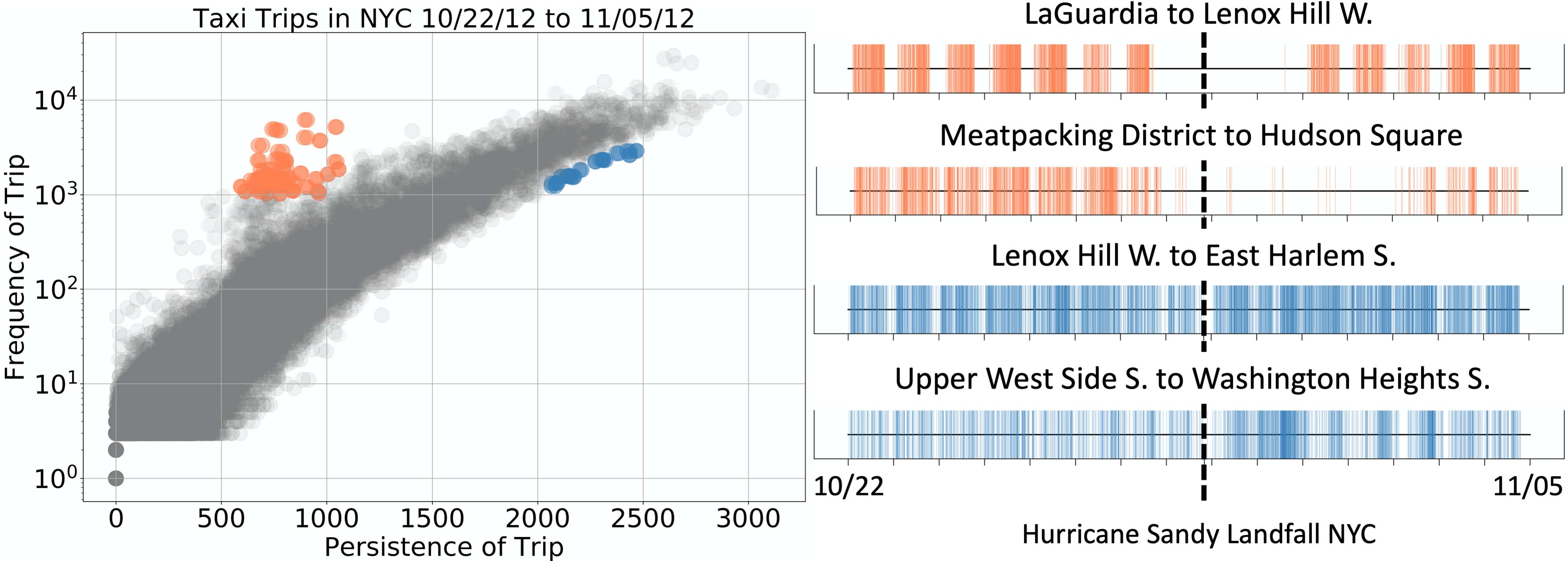}
    \vspace{-0.5cm}
    \caption{Taxi trips in NYC with high frequency but low persistence reveal neighborhoods that were brought to a standstill by Hurricane Sandy, while more persistent trips reveal those that were not.}
    \label{fig:nyc-sandy-f-vs-p}
    \vspace{-0.4cm}
\end{figure}

In \NYC, the bursty and subtly persistent snippets (Taxi trips) reveal which neighborhoods in NYC were most affected by Hurricane Sandy (Fig.~\ref{fig:nyc-sandy-f-vs-p}). The hurricane made landfall in NYC around 8pm EST on 10/29 (shown with a dashed line). The bursty anomalies (orange) all correspond to neighborhoods that were brought to a stand-still during the storm. For example, the first trip shown is from LaGuardia airport to the Lenox Hill neighborhood. Flights were canceled \emph{en masse} at LaGuardia prior to the hurricane, interrupting trips even 
before the storm's landfall. The second timeline shows a trip from the Meatpacking District to Hudson Square. Both neighborhoods are on the coast of Manhattan Island, and experienced severe flooding. The decrease in trips prior to the storm is likely due to businesses closing in preparation. Many of the subtly persistent trips (blue) correspond to neighborhoods that were resilient despite the storm. The first example is Lenox Hill and East Harlem. While these neighborhoods border the East River, they extend inland, allowing plenty of roads for taxis to continue service on. Lenox Hill hospital was also the subject of a study of the clinical response to Hurricane Sandy \cite{walczyszyn2019battling}. Possibly, the continuity in taxi trips was due to visits to the hospital. Similarly, the Upper West Side extends into inner-Manhattan, and Washington Heights is a significant distance north of where the hurricane made landfall.

\textbf{Takeaways on Transportation.} We find that subtly persistent activity often reveals commutes, because of their high regularity. This has applications in city planning, since commute routes can be good options for introducing ride-share programs. We also find that bursty activity often reveals major changes in the real world, such as severe weather disturbances or new routes becoming possible.

\subsubsection{Social Networks}
\label{subsubsec:social-networks}

\textbf{Setup.} We use the social network \stackoverflow, where users interact by answering technical questions, commenting on questions, and commenting on answers. We first analyze a 3-month interval. For this analysis, we use the view $\view = \idView$, set $\maxSize = 3$, and $\maxDuration = 1$hr. Secondly, we analyze the entire dataset using $\view = \orderView$, with $\maxSize = 3$ and $\maxDuration = 15$min. In the latter scenario, the snippets we analyze are a superset of the 3-node, 3-edge temporal motifs studied in \cite{paranjape2017motifs}. We plot, in orange, snippets involving users commenting on questions, in black, those involving users commenting on answers, and in blue, answering questions.

\vspace{0.1cm}
\noindent  \textbf{Results.} Results are shown in Fig.~\ref{fig:stackoverflow}. The timelines in (a) are for activity snippets among \emph{specific} users. The bursty anomaly (orange) reveals users \texttt{u72603} and \texttt{u82199} (anonymized ids) commenting on \texttt{u82199}'s answer 36 times over the course of 1hr in the 3-month interval, then never interacting again. The timeline shown is zoomed-in on for clarity. The most persistent snippets (one shown in black) are users commenting on their own answers, suggesting that it is unusual for the same two distinct users to interact persistently over time. On the other hand, the subtly persistent anomaly (blue) reveals \texttt{\small u1950} regularly answering \texttt{\small u55747}'s questions. We show the persistent, bursty, and subtly anomalous snippets in (i)-(iii). We give the \pf plot used to identify these snippets in \S~\ref{subsec:pf-supplement}.

The right plot uses the view $\view = \orderView$, where any users can form an occurrence of the activity snippet. Remarkably, we find that activity involving the three interaction types (commenting on questions, commenting on answers, and answering questions) fall in distinct places on the plot. Activity involving comments on questions tends to fall in the bursty region of the plot, while answers are the most subtly persistent. 
To better understand the phenomenon, we zoom in on the region shown in the upper-left.
We observe that most activity snippets in this box target a single user (i.e., all discussion is directed towards that user), and involve at least two distinct users. We give examples in (iv)-(v) of Fig.~\ref{fig:stackoverflow}. In (iv), a user comments on another's post, the second user responds, and the original user comments again. In (v), the second user responds twice to the original comment. These snippets capture natural, technical discussions among users. Remarkably, these discussions center around questions (orange) and answers (black) at roughly equal frequencies, but they occur with higher persistence for answers. We conjecture that this is because, when a question is asked, it triggers a flurry of comments on it; but once the question is answered, these tend to die out. On the other hand, comments on \emph{answers} persist, because new users may have the same question, and have follow-up questions, even months after the question has been answered. {Since both have indistinguishable frequencies ($10^5$-$10^7$), this subtle difference in behavior is not revealed by frequency alone.}

\textbf{Takeaways on Social Networks.} We find that subtly persistence activity can reveal users who do not interact often, but do \emph{regularly}---i.e., \emph{similar} users who are missed when only looking at how many times they interact. We also find that some types of posts promote continual activity, while others trigger bursts.

\begin{figure}[t!]
    \centering
    \includegraphics[width=\linewidth]{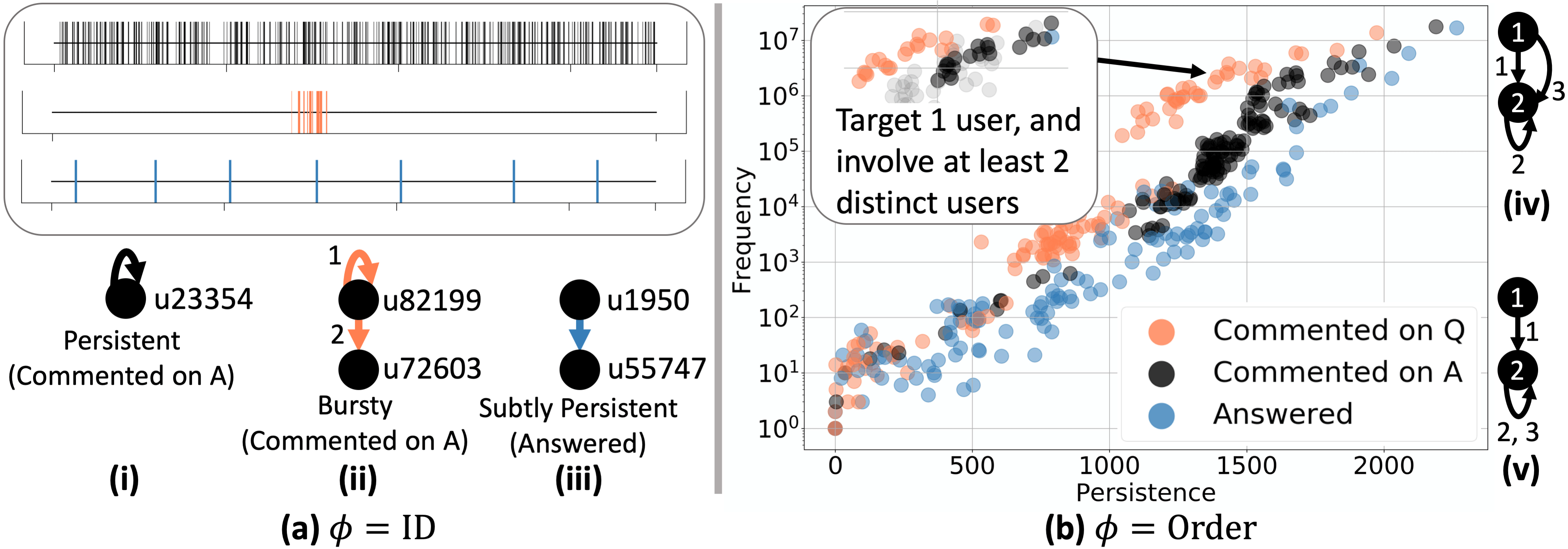}
    \vspace{-0.5cm}
    \caption{Stackoverflow Analysis. (a): $\view = \idView$. Persistent snippets usually correspond to users commenting on their own answers. The bursty snippet reveals 36 back-and-forths between two users in 1hr. The subtly persistent snippet reveals user \texttt{\small u1950} regularly answering user \texttt{\small u55747}'s questions. (b): $\view = \orderView$. Discussions targeting one user's content and involving at least two users, occur with similar frequency for comments on Qs and As, but comments on Qs are burstier---a distinction not captured by frequency alone.}
    \label{fig:stackoverflow}
    \vspace{-0.35cm}
\end{figure}

\subsection{\rTwo{}: Anomalies in Real Time}
\label{subsec:realtime}
We next demonstrate that bursty and subtly persistent snippets can be identified in real time, right when they occur. 

\setlength{\tabcolsep}{2pt}

\subsubsection{Generating Anomaly Scores}
\label{subsubsec:gen-anomaly}

{To identify anomalies automatically in real-time (without visually inspecting \pf plots as in \S~\ref{subsec:p_vs_f}), we use the following process.
When an activity snippet $\pattern$ appears in the stream at time $\timestamp$, we generate a 2D point $<$frequency, persistence$>$, or $[\freq(\pattern;[\streamStart,\timestamp])$, $\persistence(\pattern;[\streamStart,\timestamp])]$, corresponding to the dimensions of a \pf plot. Then, any streaming anomaly detection method can be applied. We use the Random Cut Forest (RCF) method \cite{guha2016robust}, which 
gives a real-valued anomaly score for each 
point in a stream. 
For implementation details, see \S~\ref{subsec:rcf-supplement}. }

\subsubsection{Real-time Qualitative Anomaly Detection (\pf)}
\label{subsubsec:realtime-PvF}

\textbf{Setup.} We analyze two networks: \reddit and \NYC trips from the first two months of 2019. For each, we choose ground-truth snippets by choosing a subtly persistent anomaly, a bursty anomaly, and a snippet corresponding to a point from the middle of the \pf plot. We use \methodOnline with $\maxSize = 1$, and set $(\wExp, \fExp, \sExp)$ values, which we give in \S~\ref{subsec:param-ref}. At each time $\timestamp$, we compute the anomaly scores of each ground-truth snippet (\S~\ref{subsubsec:gen-anomaly}), and
compare the scores to the median and standard deviation of the anomaly scores of all points seen to that point. We label an activity snippet at time $\timestamp$ as a level 1, 2, or 3 anomaly if it is 1 to 2, 2 to 3, or 3+ standard deviations above the median score respectively.

\vspace{0.1cm}
\noindent \textbf{Results.} The results in Fig.~\ref{fig:real-time-anom} for \reddit (left) and \NYC (right) show level 1, 2, and 3 anomalies colored green, orange, and maroon. 

The subtly persistent snippet in \reddit corresponds to the subreddit \cig referencing \poker. This is the 3rd most persistent snippet overall, but only the 252nd most frequent. References between these subreddits occur remarkably regularly. Upon investigation, we found that \poker was formed by a popular user in \cig. We conjecture that the snippet corresponds to this user, or their followers, promoting the content of the other subreddit. 
The bursty anomaly corresponds to \odds referencing \hockey, presumably picking winners for each night's hockey game. The bursts align with hockey seasons. \methodOnline consistently identifies this activity snippet as anomalous. Furthermore, the anomaly score decreases as expected over time, since as the bursts return yearly during hockey season, the snippet becomes more persistent. 
The third snippet is neither bursty nor regular, and it is correctly not flagged as an anomaly.

In NYC, 
the subtly persistent anomaly reveals a taxi trip from Kew Gardens Hills in Queens, to Manhattan, near the United Nations building. The taxi trip is repeated every day shortly after midnight, and is almost never taken at any other time. The nature of the trip is unknown, but surprisingly regular. The score, as expected, grows over time, as the continued regularity increases anomalousness. 
The bursty anomaly captures taxi trips departing from and arriving at the zone containing the NYC Taxi \& Limousine Commission. The Taxi Commission inspects taxis and is open 5 days a week, which suggests that the bursts correspond to test-drives of taxis for inspection during business hours. \methodOnline consistently identifies these bursts. 
Again, the third snippet is neither bursty nor regular, and is not often flagged as an anomaly.

\begin{figure}[t!]
    \centering
    \includegraphics[width=\linewidth]{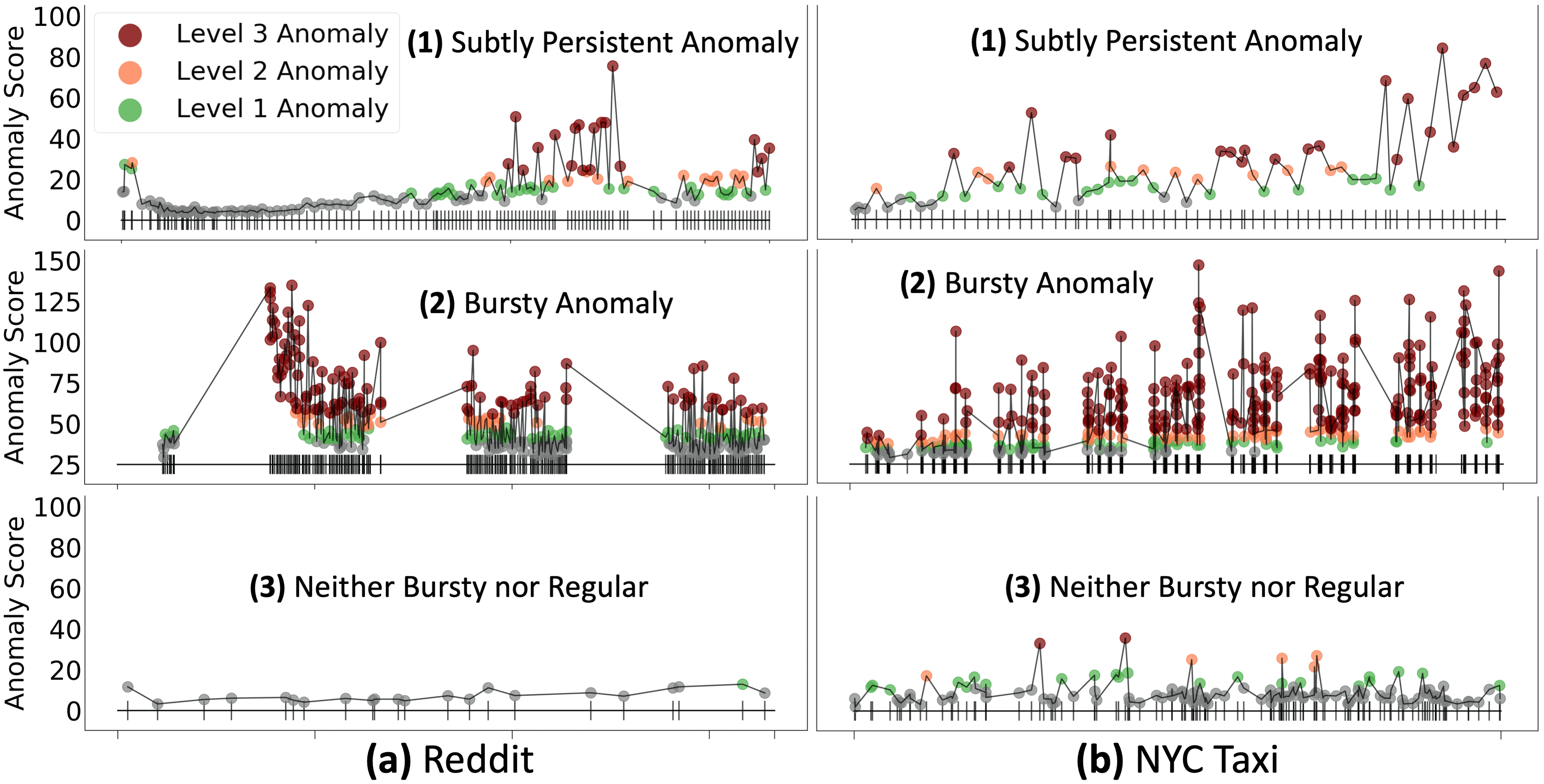}
    \vspace{-0.7cm}
    \caption{With \method{}, we are able to identify anomalies in \emph{real-time}. Not only can we find bursty anomalies, but also subtly \emph{persistent} anomalies: those that occur regularly and continually, but with frequency too low to be discovered by aggregate count alone. Anomaly levels capture how anomalous an occurrence is \S~\ref{subsubsec:realtime-PvF}.}
    \label{fig:real-time-anom}
    \vspace{-0.2cm}
\end{figure}

\setlength{\tabcolsep}{2pt}
\begin{table}[t]
\vspace{-0.1cm}
\centering
\scriptsize 
	\caption{{Results for identifying subtly persistent and bursty anomalies. Statistically significant results are marked with an ``$\ast$''. \methodOnline outperforms all baselines at identifying subtly persistent anomalies, which is not a well-studied problem. \methodOnline also performs competitively with baselines on bursty anomaly detection, leading to the best overall performance (avg AUC).} 
    }
	\vspace{-0.35cm}
	\label{table:quant-anomaly-detection}
	\resizebox{\columnwidth}{!}{
	\begin{tabular}{@{}llccccccccc}
		\toprule
		\emph{} & \emph{Metric} & \freqBaseline & \sedanspot \cite{eswaran2018sedanspot} & \midas \cite{bhatia2020midas} & DS \cite{dai2016finding} & \methodOnline \\
		\toprule
		\multirow{4}{*}{{\rotatebox{90}{\bf Subtle}}}
		& AUC & $0.8325 \scriptsize{\pm 0.02}$ & $0.4519 \scriptsize{\pm 0.01}$ & $0.4520 \scriptsize{\pm 0.02}$ & $0.7435 \scriptsize{\pm 0.03}$ & $\mathbf{0.9309}\scriptsize{\pm 0.00}\ast$ \\
		& F1@100 & $0.0505 \scriptsize{\pm 0.01}$ & $0.0001 \scriptsize{\pm 0.00}$ & $0.0000 \scriptsize{\pm 0.00}$ & $0.0076 \scriptsize{\pm 0.00}$ & $\mathbf{0.0508}\scriptsize{\pm 0.01}$\\
		& F1@1K & $0.1812 \scriptsize{\pm 0.00}$ & $0.0035 \scriptsize{\pm 0.00}$ & $0.0003 \scriptsize{\pm 0.00}$ & $0.0378 \scriptsize{\pm 0.01}$ & $\mathbf{0.2580}\scriptsize{\pm 0.03}\ast$\\
		& F1@2K & $0.1572 \scriptsize{\pm 0.01}$ & $0.0098 \scriptsize{\pm 0.00}$ & $0.0002 \scriptsize{\pm 0.00}$ & $0.0561 \scriptsize{\pm 0.01}$ & $\mathbf{0.3292}\scriptsize{\pm 0.03}\ast$\\
        \midrule
        \multirow{4}{*}{{\rotatebox{90}{\bf Bursty}}}
        & AUC & $0.8450 \scriptsize{\pm 0.00}$ & $0.6390\scriptsize{\pm 0.00}$ & $\mathbf{0.9434}\scriptsize{\pm 0.00}\ast$ & $0.8632\scriptsize{\pm 0.00}$ & $0.8359\scriptsize{\pm 0.01}$\\
		& F1@500K & $\mathbf{0.3089}\scriptsize{\pm 0.00}\ast$ & $0.2745\scriptsize{\pm 0.00}$ & $0.3019\scriptsize{\pm 0.00}$ & $0.3063\scriptsize{\pm 0.00}$ & $0.2978\scriptsize{\pm 0.00}$ \\
		& F1@1M & $\mathbf{0.5351}\scriptsize{\pm 0.00}\ast$ & $0.4527\scriptsize{\pm 0.00}$ & $0.5274\scriptsize{\pm 0.00}$ & $0.5295\scriptsize{\pm 0.00}$ & $0.5169\scriptsize{\pm 0.00}$\\
		& F1@2M & $0.7184\scriptsize{\pm 0.00}$ & $0.6309\scriptsize{\pm 0.00}$ & $\mathbf{0.8378}\scriptsize{\pm 0.00}\ast$ & $0.8066\scriptsize{\pm 0.00}$ & $0.7770\scriptsize{\pm 0.01}$ \\
		\midrule
        & \textbf{Avg AUC} & 0.8388 (2) & 0.5455 (5) & 0.6977 (4) & 0.8034 (3) & \textbf{0.8834} (1) \\
  	    \bottomrule
	\end{tabular}
	}
\vspace{-0.3cm}
\end{table}

\subsubsection{Real-time Quantitative Anomaly Detection}
\label{subsubsec:quant-anom}

\textbf{Setup.} We quantitatively analyze \methodOnline's performance at identifying both subtle and bursty anomalies. For subtle anomalies, we use three months of \chicago, from 01/2014 to 03/2014, and inject 50 synthetic bike trips that simulate infrequent, but lasting and surprisingly regular traffic, analogous to the first taxi trip in Fig.~\ref{fig:real-time-anom}. We describe the exact injection procedure in \S~\ref{subsec:anomaly-injection}. Each of the 50 anomalous trips occurs repeatedly, and the task is to identify the occurrences of all 50 bike trips. We average results over ten random injection sets. For bursty anomalies, we use the \darpa network commonly used for the task \cite{eswaran2018sedanspot, bhatia2020midas}. In this dataset, 2.7 million edges correspond to various bursty network attacks (e.g., denial of service). The goal is to identify edges that are part of these attacks. We use the same $\wExp, \fExp$, and $\sExp$ as \S~\ref{subsubsec:realtime-PvF}, since we found them useful for visually identifying bursty and subtly persistent snippets. 

\noindent \textbf{Baselines.} 
{(1)~$\freqBaseline$ scores snippets as their number of occurrences, divided by the total number of all snippet occurrences. It can be thought of as \cite{paranjape2017motifs} with motifs extended to activity snippets. (2)~\sedanspot \cite{eswaran2018sedanspot} and (3)~\midas \cite{bhatia2020midas}, state-of-the-art methods for \emph{bursty} anomalies in edge-streams, output an anomaly score for each edge update. We set parameters as in the respective papers, and use the authors' code. (4) \ds adapts the heuristic persistence of an item in a data-stream \cite{dai2016finding}. 
We apply \ds exactly as \method, but replace the persistence in the 2D point with their heurisitc (described in \S~\ref{subsec:measure-theory}). For consistency, we follow the authors' suggestion of dividing the stream into 60 measurement periods, even though this unrealistically assumes that the stream length is known \emph{a priori}.}

\vspace{0.1cm}
\noindent \textbf{Results.} We give results in Tab.~\ref{table:quant-anomaly-detection}. \methodOnline outperforms baselines at finding subtly persistent anomalies. Simultaneously, for bursty anomalies, 
it performs competitively with \midas, which is designed specifically for the task of finding bursty anomalies. 
On the other hand, the methods targeting bursts do not perform competitively at finding subtle anomalies. 
{All but one result are statistically significant at a  0.01 $p$-value in a paired t-test.}

\subsection{\rThree{}: Efficiency and Scalability}
\label{subsec:scale}
We evaluate whether \methodOnline can process edges at least as quickly as they arrive in a stream, 
while allowing snippets to be reasonably sized ($\maxSize$) and take a reasonable amount of time to form ($\maxDuration$).
We then analyze how \methodOffline scales with the number of edge-updates. We discuss hardware in \S~\ref{subsec:hardware}.

\subsubsection{\methodOnline Efficiency} 
\textbf{Setup.} 
To evaluate \methodOnline's efficiency over different $\maxDuration$ and $\maxSize$ settings, we create plots for $\maxSize \in \{1, 2, 3\}$. In each, we fix $\maxSize$ and vary $\maxDuration \in \{60, 120, 180, $ $300, 600, 900, 1800\}$. We show edges processed per second for each parameter, for all datasets with rates less than 1 update/sec (since streams with significantly different rates are not comparable). Each point is averaged over 5 random intervals of 100K edge-updates (the same intervals across parameters).

\vspace{0.1cm}
\noindent \textbf{Results.} We show the results in Fig.~\ref{fig:edge-updates-per-sec}. For $\maxSize = 1$, activity snippets have duration $\duration = 0$, which is why in the first plot, the edges processed per second is consistent over all $\maxDuration$. Across all streams, and parameters, \methodOnline processes edge-updates 10K to 360K times faster than the rate of the corresponding stream.

\subsubsection{\methodOffline Scalability}

\begin{figure}[t!]
    \centering
    \includegraphics[width=\linewidth]{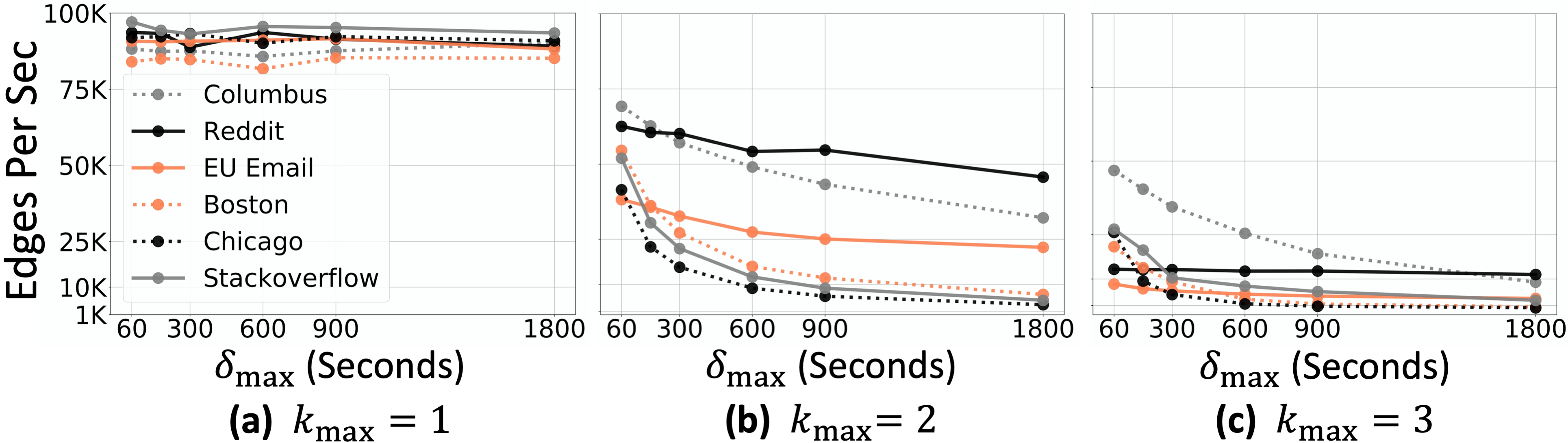}
    \vspace{-0.5cm}
    \caption{{\methodOnline's performance when varying $\maxDuration$, $\maxSize$ on several datasets. 
    Across all parameters, \methodOnline processes edges in each stream 10K-360K times faster than the rate of the stream.}}
    \label{fig:edge-updates-per-sec}
    \vspace{-0.3cm}
\end{figure}

\begin{wrapfigure}{r}{0.45\linewidth}
\vspace{-0.4cm}
    \centering
    \includegraphics[width=\linewidth]{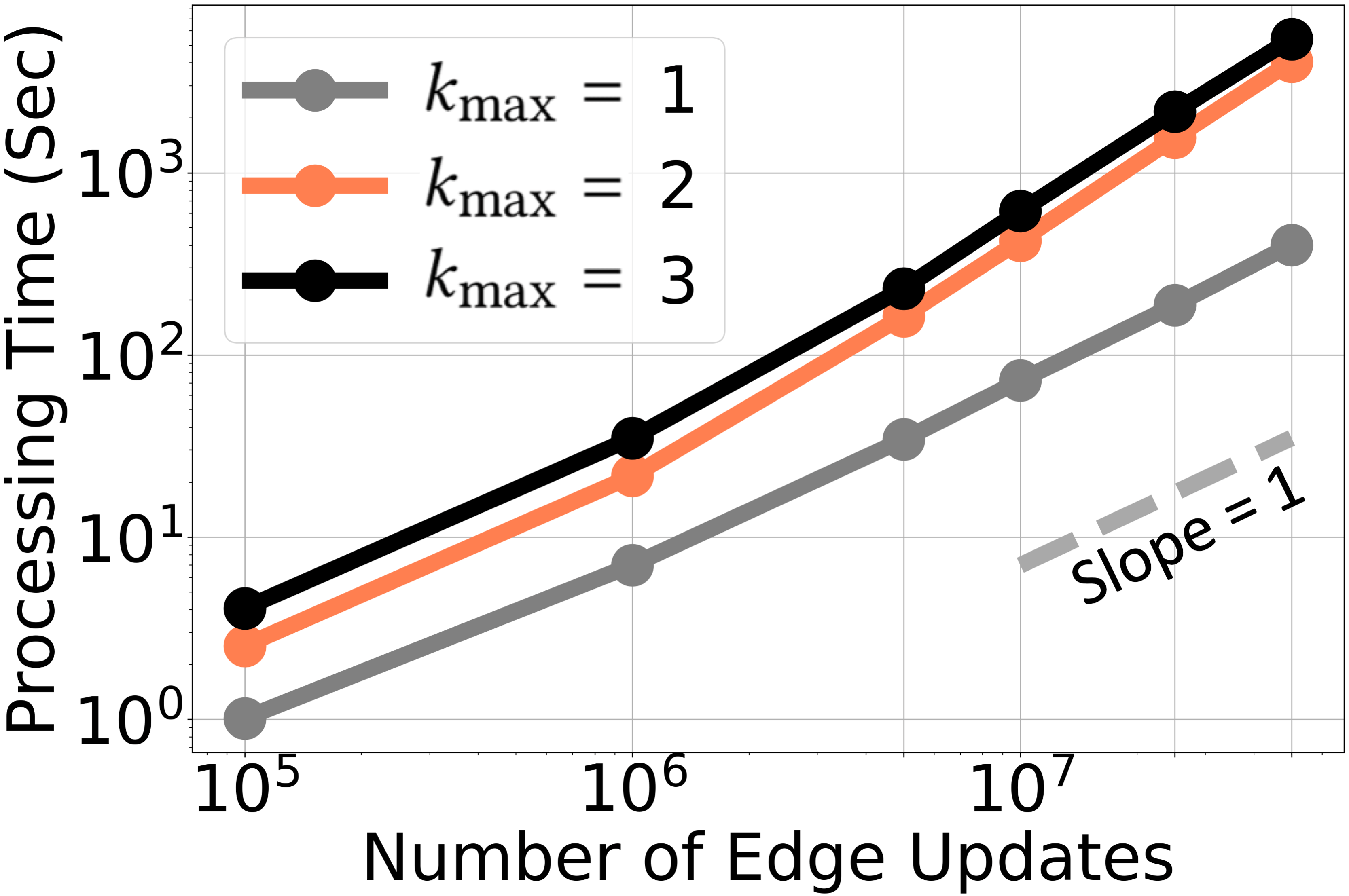}
    \vspace{-0.5cm}
    \caption{\methodOffline scales linearly as the network grows.}
    \label{fig:processing-time}
    \vspace{-0.5cm}
\end{wrapfigure}
\textbf{Setup.} We evaluate how \methodOffline scales with increasingly more edge-updates in \stackoverflow, our network with the most edges and nodes. We process the first 100K, 500K, 1M, 10M, 25M, and 50M edge-updates 5 times, and report the average runtime in seconds (Fig.~\ref{fig:processing-time}). We fix $\maxDuration = 600$ sec (10 min) and evaluate $\maxSize \in \{1, 2, 3\}$.

\vspace{0.1cm}
\noindent \textbf{Result.}  \methodOffline scales linearly as the network grows.

\section{Conclusion}
\label{sec:conclusion}
In this paper, we propose mining persistent activity in continually evolving networks. Our precise, theoretical definition of persistence captures, beyond the aggregate number of occurrences, for how long and how regularly the activity has occurred. 
We propose \method (both offline and streaming variants) to measure the persistence of activity in evolving networks, and use it to 
gain a better understanding of networks by revealing activity that frequency alone could not, from infrequent but surprisingly regular trips in traffic networks to heated conversations in social networks.
{Future work includes further developing persistence-based anomaly detection, and techniques for automatic parameter tuning ($\wExp, \fExp,$ $\sExp$).}

\section*{Acknowledgements}
{
{This work is supported by}
an NSF GRFP 
Fellowship, 
the NSF under Grant No. IIS 1845491, Army Young Investigator Award No. W911NF1810397, 
and Adobe, Amazon, and Google faculty awards.
}
\bibliographystyle{plain}
\bibliography{abbrev,references}

\appendix
\clearpage
\section{Supplement on reproducibility}
\subsection{Complexity Analysis}

In the special case of $\maxSize = 1$, the window $\window$ need not be maintained (only singleton snippets are extracted). Thus, the per-update and total complexity (of \methodOffline) are $\order(1)$ and $\order(|\stream|)$ respectively. We now discuss $\maxSize > 1$. To process a new update $\newUpdate$, the only non-constant cost comes from extracting new snippets in line 4 (lines 5-8 are $\order(1)$). Let $\streamRate$ be the average rate of the stream $\stream$ in updates per second. Then the average number of updates in a window $\window$ of width $\windowWidth = \maxDuration$ seconds is equal to $\streamRate \cdot \maxDuration$. Lines 14-16 are $\order(\streamRate \cdot \maxDuration)$. While new snippets must be connected (cf. \S~\ref{subsubsec:alg-extract-patterns}), in the worst case all previous $\streamRate \cdot \maxDuration$ updates are connected with $\newUpdate$. Thus, there are $\order\left(\sum_{\patternSize = 1}^{\maxSize} { \streamRate \cdot \maxDuration \choose \patternSize - 1 } \right)= \order\left({ \streamRate \cdot \maxDuration \choose \maxSize - 1 }\right)$ new snippets to extract (lines 18-20), which dominates $\order(\streamRate \cdot \maxDuration)$. Consequently, the per-update time can be controlled by choosing $\maxDuration$ to be reasonable based on the stream's rate. For \methodOffline, the total time complexity to process $|\stream|$ updates is $\order(|\stream| { \streamRate \cdot \maxDuration \choose \maxSize - 1 })$.

\subsection{Detailed Pseudocode}
\label{subsec:detailed-pseudocode}

\begin{algorithm}[h]
	\caption{\method($\stream$, $\maxDuration$, $\maxSize$, $\view$, $\wExp$, $\fExp$, $\sExp$, \textsc{variant})}
	\label{alg:method-detailed}
	\begin{algorithmic}[1]
	    \small
	    \Statex \textbf{Input}: Stream $\stream$, max snippet duration $\maxDuration$ and size $\maxSize$, view $\view$, persistence exponents $\wExp$, $\fExp$, $\sExp$, the variant (\methodOffline/\methodOnline).
	    \State $\windowWidth \gets \maxDuration$ \Comment{\textcolor{gray}{The window size enforces the maximum duration}}
	    \State $\window \gets \emptyset$ \Comment{\textcolor{gray}{Window is empty initially}}
	    \While{$\newUpdate \in \stream$} \Comment{\textcolor{gray}{While there is a new update in the stream}}
	        \For{$\pattern \in \Call{ExtractNewSnippets}{\window, \newUpdate, \timestamp, \view}$}
	            \If{\textsc{variant} is \methodOffline}
	                \State $\occs_\pattern \gets \occs_\pattern \cup \{\timestamp\}$ \Comment{\textcolor{gray}{Add the new occurrence's timestamp}}
	            \Else
	                \State Update $\persistence(\pattern;[\streamStart,\timestamp])$ via Thm.~\ref{thm:entropy-incremental}
	            \EndIf
	        \EndFor
	    \EndWhile
	    \If{\textsc{variant} is \methodOffline}
	        \For{each $\pattern$}
	            \State Compute $\persistence(\pattern;[\streamStart, \streamEnd])$
	        \EndFor
	    \EndIf
	\Procedure{ExtractNewSnippets}{$\window, \newUpdate, \timestamp, \view$}
	    \State $\extractedPatterns \gets \{\view(\newUpdate)\}$ \Comment{\textcolor{gray}{Add the singleton snippet for $\newUpdate$}}
	    \For{$\update \in \window$}
	        \If{$\timestamp - \updateTimestamp(\update) > \windowWidth$} 
	            \State $\window \gets \window \setminus \{\update\}$ \Comment{\textcolor{gray}{Remove stale updates}}
	        \EndIf
	    \EndFor
	    \State $\window \gets \window \cup \{\newUpdate\}$ \Comment{\textcolor{gray}{Add the new update}}
	    \For{$\patternSize = {2, \dots, \maxSize}$}
	        \For{each new size $\patternSize$ snippet $\pattern$}
	            \State $\extractedPatterns \gets \extractedPatterns \cup \{\text{\pattern}\}$
	        \EndFor
	    \EndFor
	    \State \Return $\extractedPatterns$
	\EndProcedure
	
	\end{algorithmic}
\end{algorithm}

\subsection{Choosing $\wExp$, $\fExp$, $\sExp$}
\label{subsec:setting-exponents}

This section provides analysis of the exponents $\wExp$, $\fExp$, and $\sExp$. At the end of the section, we give suggestions for practitioners. 

\subsubsection{Rank Correlation Between Persistence and Components}

\begin{figure}[t!]
    \centering
    \includegraphics[width=\linewidth]{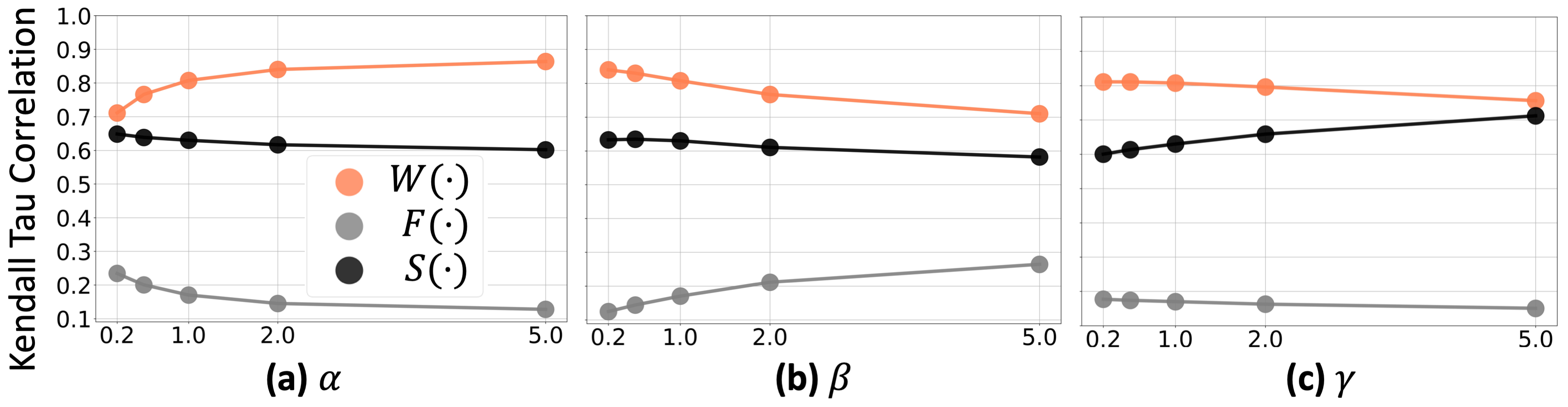}
    \vspace{-0.5cm}
    \caption{Kendall Tau rank correlation between snippets ranked by components of persistence and persistence itself, over various values of exponents. In general, as one exponent is increased, and the others fixed, the corresponding component becomes more correlated with persistence.}
    \label{fig:kendall-tau}
    \vspace{-0.3cm}
\end{figure}

\begin{figure}[t!]
    \vspace{-0.3cm}
	\centering
    \subfloat[$\wExp \in \{0.2, 0.5, 1, 2.0, 5.0\}$\label{fig:params-alpha}]{
      \includegraphics[width=\linewidth]{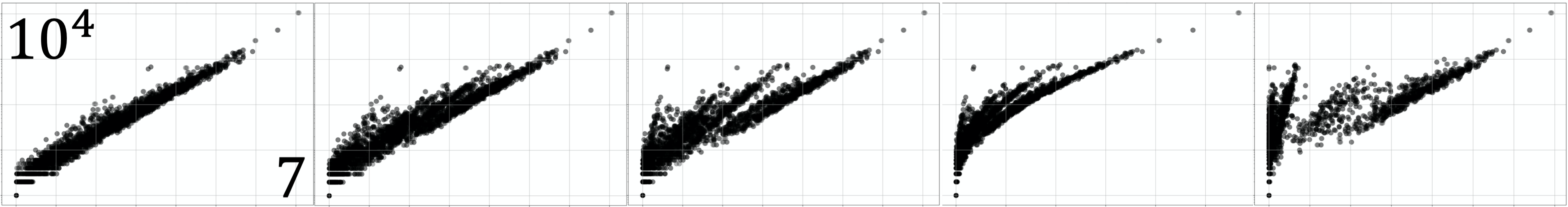}
    }\\
    \vspace{-0.1cm}
    \subfloat[$\fExp \in \{0.2, 0.5, 1, 2.0, 5.0\}$\label{fig:params-beta}]{
      \includegraphics[width=\linewidth]{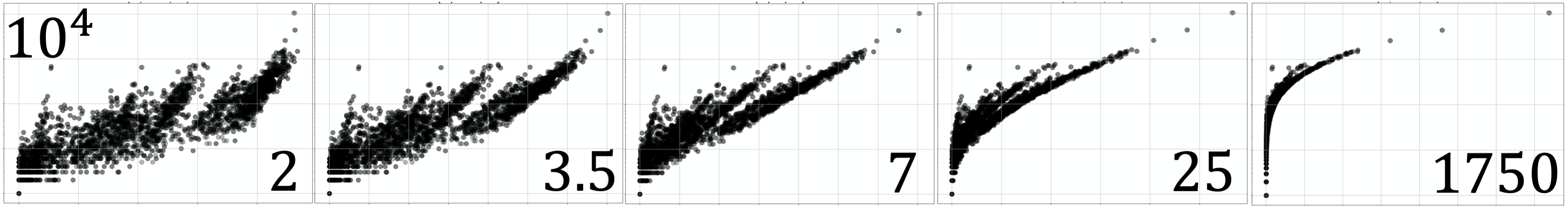}
    }\\
    \vspace{-0.1cm}
    \subfloat[$\sExp \in \{0.2, 0.5, 1, 2.0, 5.0\}$\label{fig:params-gamma}]{
      \includegraphics[width=\linewidth]{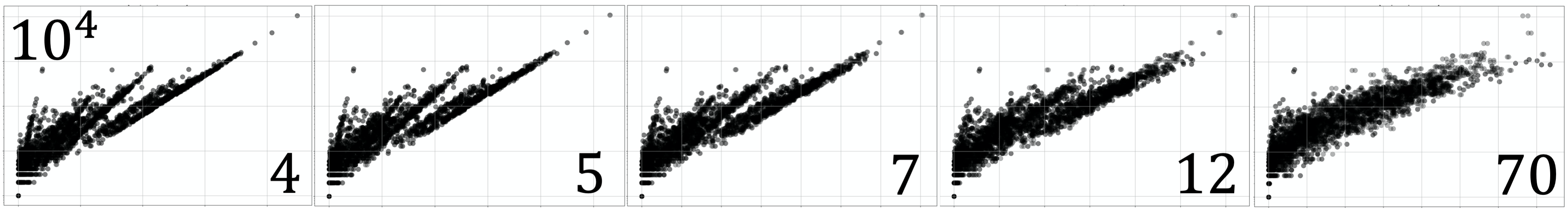}
    }
    \vspace{-0.25cm}
	\caption{\pf plots for \columbus varying each exponent in $\{0.2, 0.5, 1, 2.0, 5.0\}$, while fixing the other two at 1. The main takeaways are that small values of $\fExp$ (0.2 or 0.5) increase the spread of points, while increasing $\sExp$ (2.0 or 5.0) emphasizes points in the lower left (i.e., very regular snippets). Thus, it is generally effective to set $\wExp = 1$, $\fExp \in (0, 1)$, and $\sExp \in (1, \infty)$.}
    \label{fig:params-all}
    \vspace{-0.3cm}
\end{figure}

We first show how varying the exponents affects how much each component contributes to persistence. To do so, we compare the ranking of snippets 
in descending order on $\width(\cdot)$, $\freq(\cdot)$, and $\spread(\cdot)$, with the ranking in descending order based on persistence $\persistence(\cdot)$. We use Kendall-Tau rank-correlation to compare rankings. For each component of persistence $\width(\cdot)$, $\freq(\cdot)$, and $\spread(\cdot)$, we vary its corresponding exponent $\wExp$, $\fExp$, or $\sExp$ over the values $\{0.2, 0.5, 1, 2.0, 5.0\}$, while fixing the other two exponents at 1. Each exponent has a plot in Fig.~\ref{fig:kendall-tau} showing the rank-correlation of that component with persistence. In general, as the exponent corresponding to a component is increased, that component becomes more correlated with persistence, while the others become less correlated.

\subsubsection{Effect on \pf Plots}
\label{subsubsec:visual-parmas}

We next show how varying the exponents changes \pf plots visually. Again, for each component of persistence $\width(\cdot)$, $\freq(\cdot)$, and $\spread(\cdot)$, we vary its corresponding exponent $\wExp$, $\fExp$, or $\sExp$ over $\{0.2, 0.5, 1, 2.0, 5.0\}$, while fixing the others at 1. We show plots for these exponents in Fig.~\ref{fig:params-all}. The value in the upper-left corner is the maximum frequency, and lower-right the maximum persistence. Since $\width(\cdot) \in [0, 1]$, varying $\wExp$ does not change the range of persistence values. In contrast, $\freq(\cdot)$ is unbounded, and increasing it can cause the range of persistence values to grow significantly. The main takeaways are that small values of $\fExp$ (0.2 or 0.5) increase the spread of points, while increasing $\sExp$ (2.0 or 5.0) emphasizes points in the lower left (very regular snippets). We chose exponents to emphasize the snippets of interest in our experiments.

\subsubsection{Sensitivity on Anomaly Detection}

\begin{table}[t]
\centering
	\caption{Additional results at identifying subtly persistent and bursty anomalies, showing the effect of parameters ($\wExp, \fExp, \sExp$).
    }
	\vspace{-0.35cm}
	\label{table:quant-anomaly-detection-full}
	\resizebox{\columnwidth}{!}{
	\begin{tabular}{p{0.5cm}lccccccc}
		\toprule
		& \emph{Metric} & (1, 1, 1) & (0.2, 1, 1) & (1, 0.2, 1) & (1, 1, 0.2) \\
		\toprule
		\multirow{4}{*}{{\rotatebox{90}{\bf Subtle}}}
		& AUC & $0.709 \pm 0.02$ & $0.801 \pm 0.02$ & $0.712 \pm 0.02$ & $0.686 \pm 0.03$ \\
		& F1@100 & $0.006 \pm 0.00$ & $0.009 \pm 0.00$ & $0.005 \pm 0.00$ & $0.007 \pm 0.00$ \\
		& F1@1K & $0.028 \pm 0.00$ & $0.051 \pm 0.01$ & $0.029 \pm 0.00$ & $0.028 \pm 0.01$ \\
		& F1@2K & $0.042 \pm 0.01$ & $0.076 \pm 0.01$ & $0.043 \pm 0.01$ & $0.041 \pm 0.01$ \\
        \midrule
        \multirow{4}{*}{{\rotatebox{90}{\bf Bursty}}}
        & AUC & $0.856 \pm 0.01$ & $0.867 \pm 0.01$ & $0.831 \pm 0.01$ & $0.853 \pm 0.01$ \\
		& F1@500K & $0.307 \pm 0.00$ & $0.307 \pm 0.00$ & $0.307 \pm 0.00$ & $0.307 \pm 0.00$ \\
		& F1@1M & $0.525 \pm 0.00$ & $0.527 \pm 0.00$ & $0.516 \pm 0.01$ & $0.524 \pm 0.00$\\
		& F1@2M & $0.763 \pm 0.01$ & $0.783 \pm 0.01$ & $0.742 \pm 0.02$ & $0.756 \pm 0.01$ \\
  	    \bottomrule
	\end{tabular}
	}
\end{table}

We give anomaly detection results in Tab.~\ref{table:quant-anomaly-detection-full}, showing the effect of downweighting each exponent to $0.2$. The results are mostly stable across exponents. The main exception, $(0.2, 1, 1)$, leads to considerably better results on subtle anomalies. Since $\width(\cdot) \in [0,1]$, $\wExp = 0.2$ \emph{up}-weights $\width(\cdot)$. Since the subtly persistent anomalies (\S~\ref{subsec:anomaly-injection}) occur throughout the stream, we conjecture that $\wExp = 0.2$ increases their anomalousness.

\subsubsection{Advice for Practitioners}
{For visual clarity, we found it generally effective to set $\wExp = 1$, $\fExp \in (0, 1)$, and $\sExp \in (1, \infty)$---cf.~\ref{subsubsec:visual-parmas}. If practitioners wish to analyze \pf plots visually, we recommend these values, especially $\sExp > 2$, to help discover subtly regular snippets. 
If other tasks are of interest, the exponents should be tuned for that task. Indeed, setting $\wExp$, $\fExp$, $\sExp$ automatically for tasks like anomaly detection is an important direction for future work.}

\begin{figure}[t!]
    \centering
    \includegraphics[width=\linewidth]{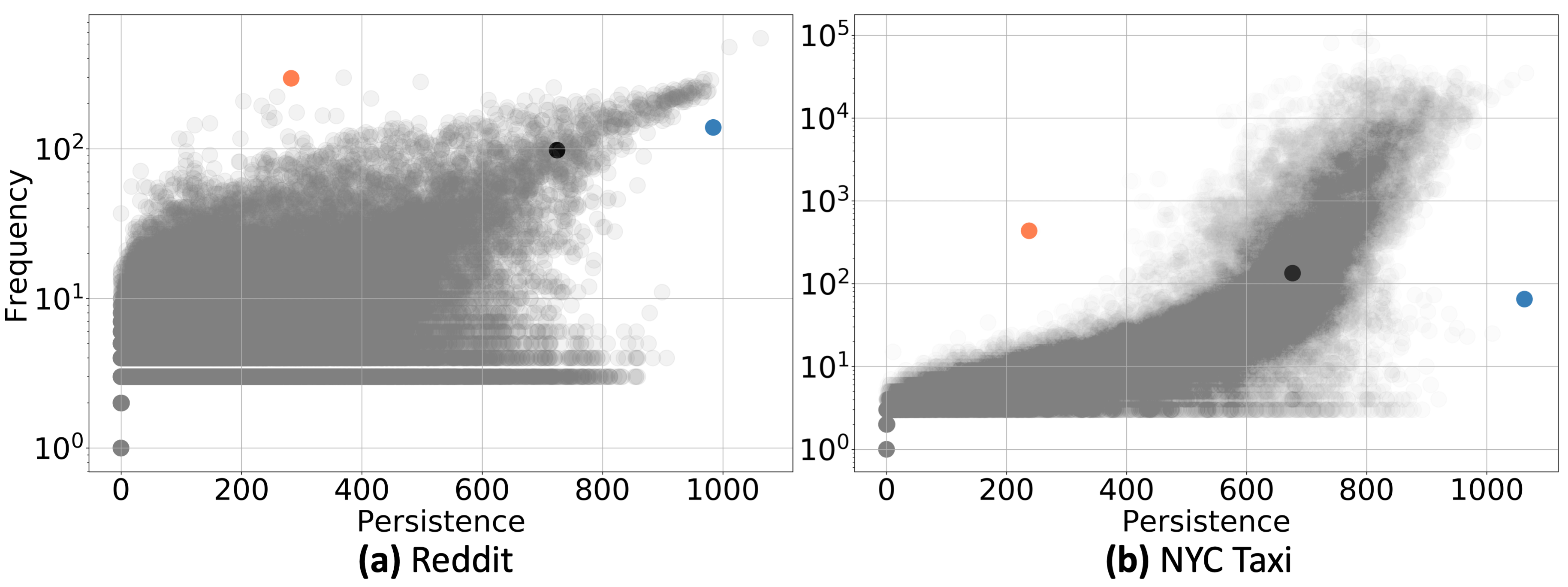}
    \vspace{-0.6cm}
    \caption{The plots used to extract ground-truth anomalies in \S~\ref{subsubsec:realtime-PvF}. Orange is the bursty anomaly, blue the subtly persistent anomaly, black the neither bursty nor subtly persistent snippet.}
    \label{fig:reddit-nyc-supplement}
    \vspace{-0.3cm}
\end{figure}

\subsection{Choosing Activity Snippets}
\label{subsec:pf-supplement}

\begin{wrapfigure}{h!}{0.55\linewidth}
\vspace{-0.4cm}
    \centering
    \includegraphics[width=\linewidth]{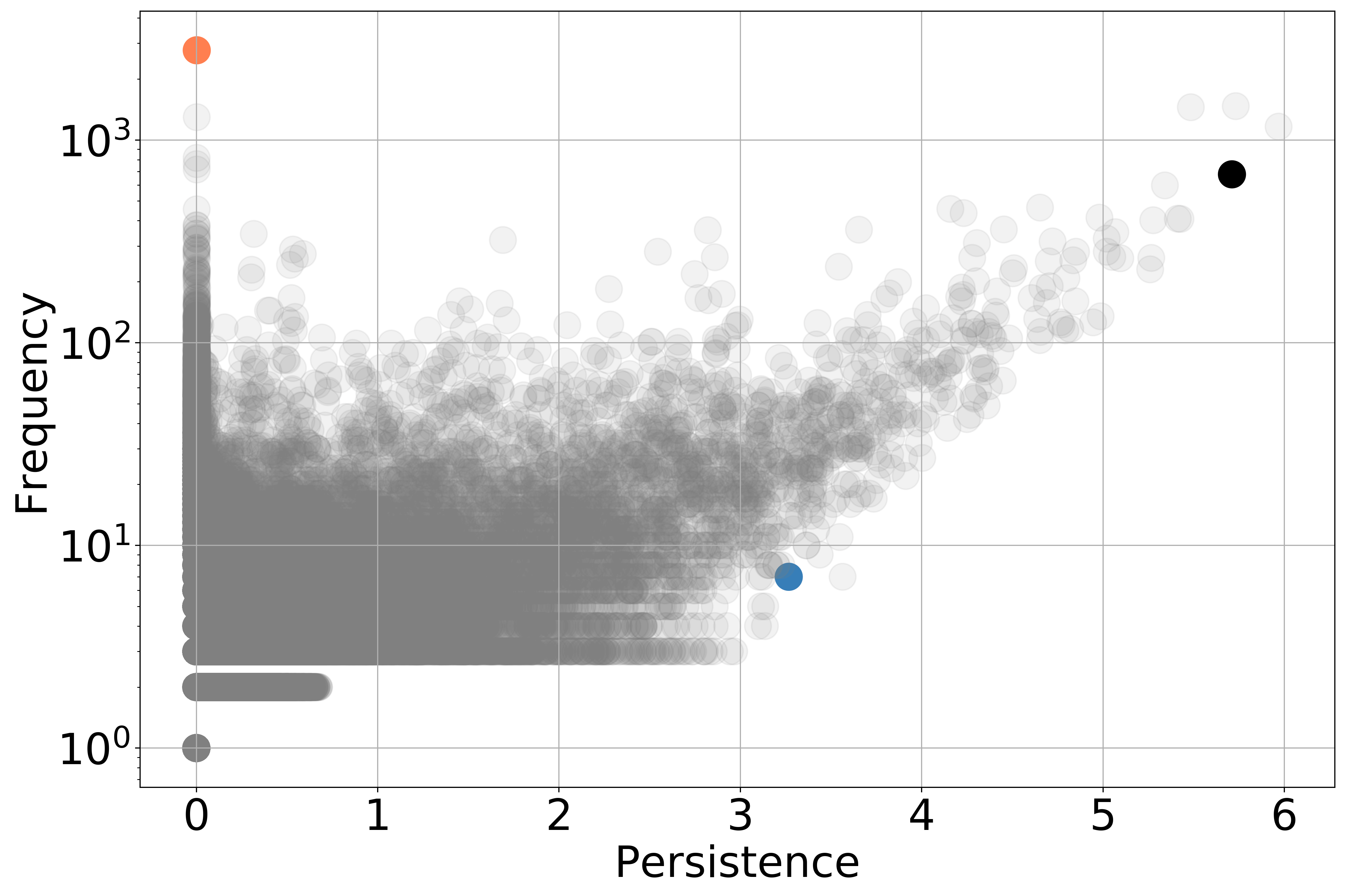}
    \vspace{-0.65cm}
    \caption{The plot used to extract ground-truth for \stackoverflow in \S~\ref{subsubsec:social-networks}. Orange is the bursty anomaly, blue the subtly persistent anomaly, and black a persistent snippet.}
    \label{fig:stackoverflow-supplement}
    \vspace{-0.3cm}
\end{wrapfigure}

We discuss our choices of activity snippets for analysis in \S~\ref{subsubsec:social-networks}-\S~\ref{subsubsec:realtime-PvF}.

\subsubsection{\stackoverflow}
\label{subsubsec:stackoverflow-supplement}

For \stackoverflow, we used the ground-truth points shown in Fig.~\ref{fig:stackoverflow-supplement}. The snippets corresponding to the orange, blue, and black points are visualized in Fig.~\ref{fig:stackoverflow}(a), (i)-(iii).

\subsubsection{\reddit and \NYC}
\label{subsubsec:reddit-supplement}

For \reddit and \NYC, we used the ground-truth snippets in Fig.~\ref{fig:reddit-nyc-supplement}. The orange snippet for \reddit is \odds referencing \hockey, the blue is for \cig referencing \poker, and the black is for \bestof referencing \finance. For NYC, orange is a trip from \taxiCommision to \taxiCommision, blue is a trip from \suspiciousSrc to \suspiciousDst, and black a trip from \nothingSrc to \nothingDst.

\begin{table}[b]
    \vspace{-0.3cm}
    \centering
    \caption{Reference of parameters used in our code for figures and tables reported in \S~\ref{sec:eval}. For maximum size of $\maxSize = 1$, the duration of a snippet is always 0, in which case the maximum duration can be set arbitrarily without affecting results.}
    \label{table:parameters}
    \vspace{-0.35cm}
    \resizebox{\columnwidth}{!}{
        \begin{tabular}{l r r r r r r r}
        \toprule
         Tab./Fig. & $\maxSize$ & $\maxDuration$ & Variant & View  $\view$ & $\wExp$ & $\fExp$ & $\sExp$ \\
        \midrule
        Fig. \ref{fig:bikes} (Boston) & 1 & N/A & \methodOffline & \idView & 1 & 0.5 & 2 \\
        Fig. \ref{fig:bikes} (Columbus) & 1 & N/A & \methodOffline & \idView &  2 & 0.5 & 3 \\
        Fig. \ref{fig:nyc-sandy-f-vs-p} & 1 & N/A & \methodOffline & \idView & 1 & 1 & 10 \\
        Fig. \ref{fig:stackoverflow} (left) & 3 & 3600 & \methodOffline & \idView & 1 & 0.5 & 2 \\
        Fig. \ref{fig:stackoverflow} (right) & 3 & 900 & \methodOffline & \orderView & 2 & 0.5 & 10 \\
        Fig. \ref{fig:real-time-anom} & 1 & N/A & \methodOnline & \idView & 1 & 0.2 & 10\\
        \bottomrule
        \end{tabular}
        }
\end{table}

\subsection{Using Random Cut Forests}
\label{subsec:rcf-supplement}
We discuss details of using Random Cut Forests for anomaly detection in \S~\ref{subsec:realtime}. Throughout the experiments, we use 10 trees in the random forest, with each having a maximum depth of 256. 
To enforce the maximum size of trees, when the maximum size is reached, before adding a new point, we chose a leaf at random to remove. Since activity snippets can reoccur, when scoring a reoccurrence we make one minor adaption. When we score the point $[\freq(\pattern;[\streamStart,\timestamp])$, $\persistence(\pattern;[\streamStart,\timestamp])]$, if we have already scored snippet $\pattern$ at some prior time $\timestamp' < \timestamp$, then we first remove the point $[\freq(\pattern;$$[\streamStart,\timestamp'])$, $\persistence(\pattern;$$[\streamStart,\timestamp'])]$ corresponding to the prior occurrence, to avoid scores decaying artificially due to prior occurrences of the same snippet.

\subsection{Injecting Subtly Persistent Anomalies}
\label{subsec:anomaly-injection}

We use the following procedure to inject subtly persistent anomalies for \S~\ref{subsubsec:quant-anom}. We inject bike trips into the first three months of \chicago. For each anomalous bike trip, we select a start and end position at random within 10 minutes of the start and end of the stream, so that the trip covers most of the three months. We then select a number of occurrences  $|\occs|_\pattern$ from 5 to 100, weighted inversely proportional to the chosen number, to favor lower frequencies. We inject that many anomalies at roughly uniform intervals into the stream, but perturb the gaps from uniform by $\pm 20$ minutes to simulate realistic variance. The anomalous bike trips are chosen from among those not currently present in the stream so that they do not conflict with existing trips. The number of anomalous edges is the sum of the randomly chosen number of occurrences over all 50 bike trips, and these edges are labeled as 1 while the rest are 0. We generated 10 injection sets using different random seeds, and the exact number of resulting edges injected in each set was 3322,
3354,
2714,
2474,
3764,
3366,
3606,
2760,
3360, and
2560.

\subsection{Hardware and Software}
\label{subsec:hardware}

We perform all experiments on an Intel(R) Xeon(R) CPU E5-2697 v3 @ 2.60GHz with 1TB RAM. Our code is implemented in Python.

\subsection{Reference of Parameters Used}
\label{subsec:param-ref}

Table ~\ref{table:parameters} gives the parameters used in each of the experiments.

\end{document}